\newcommand{\singlespacing}{\let\CS=\@currsize\renewcommand{\baselinestretch}{1}\tiny\CS}
\newtheorem{theorem}{Theorem}
\newtheorem{corollary}{Corollary}
\begin{document}
\baselineskip=24pt
%\singlespacing
%\doublespacing
\parskip = 10pt
\def \qed {\hfill \vrule height7pt width 5pt depth 0pt}

\title{\sc On Generalized Progressive Hybrid Censoring in presence of competing risks}
\author{\sc Arnab Koley\footnote{Department of Mathematics and Statistics, Indian Institute of
Technology Kanpur, Pin 208016, India.} \& Debasis Kundu$^*$\footnote{Corresponding author, E-mail: kundu@iitk.ac.in, Phone no. 91-512-2597141, Fax no. 91-512-2597500.}}

\date{}
\maketitle

\begin{abstract}
\noindent  The progressive Type-II hybrid censoring scheme introduced by Kundu and Joarder (\textit{Computational Statistics 
and Data Analysis}, 2509-2528, 2006), has received some attention in the last few years.  One major drawback of this 
censoring scheme is that very few observations (even no observation at all) may be observed at the end of the experiment.  
To overcome this 
problem, Cho, Sun and Lee (\textit{Statistical Methodology}, 23, 18-34, 2015) recently introduced generalized progressive 
censoring which ensures to get a pre specified number of failures.  In this paper we analyze generalized progressive censored 
data in presence of competing risks. 
For brevity we have considered only two competing causes of failures, and it is assumed that the lifetime of the competing causes
follow one parameter exponential distributions with different scale parameters.  We obtain the maximum likelihood estimators 
of the unknown parameters and also provide their exact distributions.  Based on the exact distributions of the maximum likelihood 
estimators exact confidence intervals can be obtained.  Asymptotic and bootstrap confidence intervals are also provided 
for comparison purposes.   We further consider the Bayesian analysis of the unknown parameters under a very flexible Beta-Gamma 
prior.   We provide the Bayes estimates and the associated credible intervals of the unknown parameters based on the above 
priors.   We present extensive simulation results to see the effectiveness of the proposed method and finally one real data set 
is analyzed for illustrative purpose.      
\end{abstract}
\noindent {\sc Key Words and Phrases:} Competing risk; generalized progressive hybrid censoring; beta-gamma distribution; 
maximum likelihood estimator; bootstrap confidence interval; Bayes credible interval.

\noindent {\sc AMS Subject Classifications:} 62F10, 62F03, 62H12.

%\newpage

\section{\sc Introduction}

In many life testing studies, an item or individual may fail due to different  causes.  In these studies, one observes 
the time of failure along with it the corresponding cause of failure also.  These causes as if compete with each other for the 
failure of an experimental  unit.  Hence, in the statistical literature, this is known as the competing risk problem and it has been 
studied quite extensively by several researchers see for example Kalbfleish and Prentice \cite{kalbfleish}, Lawless \cite{lawless}
and the references cited therein.  In a 
competing risk problem one is naturally interested to find the lifetime distribution of the individual cause, in presence of the 
other causes.  Analysis of the competing risk data is  mainly based on one of the two assumptions namely (i) the latent failure time model 
assumptions of Cox \cite{cox} or (ii) the cause specific hazard function assumptions by Prentice et al. \cite{prentice}.  Interested 
readers may refer to the monograph of Crowder \cite{crowder} for different interesting competing risks problems and for  details
on different competing risk models.

In life testing experiment most of times the data are censored.  The most common censoring schemes used in practice are Type-I and 
Type-II censoring schemes.  Mixture of Type-I and Type-II censoring schemes led to Type-I hybrid censoring scheme which was 
introduced by Epstein \cite{epstein} and the Type-II hybrid censoring scheme introduced by Childs et al. \cite{childs}.  Hybrid 
censoring schemes have received considerable attention in the last few years, see for example the recent review article by 
Balakrishnan and Kundu \cite{balakrishnan}  regarding the recent development on this topic.  

Note that the censoring schemes outlined above do not allow for removal of units other
than the terminal points of the experiments.   Cohen \cite{cohen} was the first one to study more general censoring scheme known as
Progressive (Type-II) censoring scheme.  The Progressive censoring scheme can be described as follows.  For a fixed sample 
size $n$ and for fixed effective sample size $m$, choose non-negative integers $R_1, \ldots, R_m$ such that $\displaystyle R_1+\ldots+R_m = n-m$.  
The experiment starts with $n$ number of items.  At the time of the first failure, $R_1$ items are chosen at random from the remaining 
$n-1$ items and they are removed from the experiment.  Similarly, at the time of the second failure, $R_2$ items are chosen at 
random from the remaining
$n-R_1-2$ items and they are removed from the experiment, and the process continues.  Finally, at the time of the $m$-th failure all the 
remaining items are removed from the system and the experiment stops.  Extensive work has been done on different aspects of progressive
censoring, see for example the recent book by Balakrishnan and Cramer \cite{BC:2014} in this respect.

One major drawback of the progressive Type-II censoring scheme is that the experiment takes a longer time to continue if 
the units are highly reliable.  To overcome that problem, Kundu and Joarder \cite{joarder} introduced the progressive Type-II 
hybrid censoring scheme, where a prefixed time point $T$ is introduced along with $n, m$ and $R_1, R_2, \ldots, R_m$.  Here the experiment 
set up is same as that of the progressive type-II censoring scheme except at the termination time point.  In this case, the 
experiment stops at the time point $\min\{T, X_{m:m:n}\}$, where $X_{m:m:n}$ denotes the failure time of the $m$-th unit.  Thus in 
this set up the experiment never goes beyond the time point $T$.  
Kundu and Joarder \cite{joarder} assumed the underlying lifetime distribution to be exponential and developed the inference 
procedures of the unknown parameters.  It was further studied  
for other lifetime distributions by several authors, see for example Chan \cite{chan}, Hemmati \cite{hemmati}, Cramer 
\cite{cramer} and the referenced 
therein.  However in this censoring scheme, the experimenter may observe only few failures or in worst case no failure at all. 
This will lead to facing the problem of estimating the parameters of the lifetime distributions of the experimental units 
efficiently.

Recently Cho et al. \cite{cho} introduced a new censoring scheme called generalized progressive hybrid censoring.  It can be 
described as follows.  The experiment starts with $n$ number of items with $k$ and $m$ prefixed integers such that 
$1 \leq k < m \leq n$. The experimenter chooses predefined time point $T$ and non negative integers $R_1, R_2, \ldots, R_m$ such 
that $m+R_1+R_2+ \ldots +R_m=n$.  If $Z_{k:m:n}$ denotes the failure of the $k$-th experimental unit then the 
experiment stops at the time point $T^*=max \{Z_{k:m:n}, min \{T, Z_{m:m:n}\}\}$.  
In this paper, we develop the statistical inference of the unknown parameters based on the data coming from a 
generalized hybrid censoring scheme in presence of competing risk.  It is assumed that the lifetime distributions of the competing
causes of failures satisfy the latent failure time model assumptions of Cox \cite{cox}.  It is further assumed that there are only
two competing causes, and the lifetime distributions of the competing causes follow one parameter exponential distribution 
with mean $\theta_1$ and $\theta_2$, respectively, and they are independently distributed.   We obtain the 
maximum likelihood estimators (MLEs) of the unknown parameters and the exact distributions of the MLEs.  Based 
on the exact distribution, we compute the exact 100(1-$\alpha$)\% confidence intervals of $\theta_1$ and $\theta_2$.  For comparison
purposes we compute the bootstrap confidence intervals and the confidence intervals based on the asymptotic distributions of the 
MLEs.  We further consider the Bayesian inference of the unknown parameters based on a very general Beta-Gamma prior.  The Bayes
estimates and the corresponding credible intervals are also constructed.  Extensive simulations are performed to compare the 
different methods, and one real data set has been analyzed for illustrative purposes.

Rest of the paper is organized as follows. In Section 2, we provide the notations, model description and the MLEs.  The exact 
conditional distributions of the MLEs are provided in Section 3.  In Section 4, we provide different confidence intervals 
of the unknown parameters.  
The Bayesian inferences are provided in Section 5.  In Section 6 we provide simulations results and the analysis of a real data
set.  Finally we conclude the paper in Section 7.  

\section{\sc Notations, Model descriptions and MLEs}
\subsection{\sc Notations}
\begin{eqnarray*}
\hbox{PDF}: & & \hbox{probability density function.}   \\
X_j: & & \hbox{random variable associated with the $j$-th cause,} \ j=1, 2.\\
Z : & & \min\{X_1, X_2\}.\\
Z_{i:m:n} : & & \hbox{$i$-th observed failure of the system,}~ 1\leq i \leq m.\\
T: & & \hbox{prefixed time point.}\\
T^*: & & \max\{Z_{k:m:n}, \min\{Z_{m:m:n}, T\}\}.\\
D_j: & & \hbox{number of failures due to cause $j$ for $j = 1, 2$}.\\
%\delta_{ji}: & & \hbox{indicator denoting the cause $j$ of the $j$-th failure for $j =1, 2; \ 1\leq i \leq m$}.\\
J: & & \hbox{number of failures till the time point $T^*$}.\\
R_{i}= & & \hbox{number of units removed at the time of $i$-th failure}, 1\leq i \leq m.\\
R^*_k= & & n-k-\sum_{i=1}^{k-1} R_i.\\
R^*_J= & & n-J-\sum_{i=1}^J R_i.\\
\frac{1}{\theta}= & & \frac{1}{\theta_1}+\frac{1}{\theta_2}.\\
\gamma_v= & & n-v+1\sum_{u=1}^{v-1}R_u.\\
\hbox{Exp}(\theta): & & \hbox{exponential distribution with PDF:}~ \frac{1}{\theta} e^{-\frac{x}{\theta}}; ~ x>0.\\
\hbox{Gamma}(a,b): & & \hbox{Gamma distribution with PDF:}~ \frac{b^a}{\Gamma (a)} e^{-bx} x^{a-1}; \ x>0.\\
\hbox{Beta}(a,b): & & \hbox{Beta distribution with PDF:}~ \frac{\Gamma(a+b)}{\Gamma (a) \Gamma (b)} x^{a-1} (1-x)^{b-1}; \ 0<x<1.\\
f_G(x;a,b,c): & & \hbox{shifted Gamma distribution with PDF:}~ \frac{c^{b}}{\Gamma(b)} (x-a)^{b-1} e^{-c(x-a)}; ~ x > a. \\  
\end{eqnarray*}

\subsection{\sc Model descriptions, MLEs}

Let $X_1$ and $X_2$ be the random variables denoting life time distributions of cause 1 and cause 2, respectively.  It is 
assumed that $X_i \sim \hbox{Exp}(\theta_i)$ for $ i=1,2$ and they are independently distributed.   At any failure time point 
one observes $Z = \min\{X_1,X_2\}$ and the associated cause of failure. We define a new indicator variable $\delta$ with $\delta_i=1$  if the $i$-th failure happens due to Cause-1 and $\delta_i=0$  if the $i$-th failure happens due to Cause-2. The probability density function of $Z$ has the following form,
\begin{equation}
f(z;\theta_1,\theta_2)=\begin{cases}
\bigg(\frac{1}{\theta_1}+\frac{1}{\theta_2}\bigg)e^{-z\big(\frac{1}{\theta_1}+\frac{1}{\theta_2}\big)}; & \text{if $z>0$},\\
0; & \text{otherwise}.
\end{cases}
\end{equation}
Under the generalized progressive hybrid censoring scheme, 
the possible values of $T^*$ are
$$
T^*=
\begin{cases}
T & \text{if \ \ \ $Z_{k:m:n}<T<Z_{m:m:n}$,}  \\
Z_{k:m:n} & \text{if \ \ \ $T<Z_{k:m:n}<Z_{m:m:n}$,}  \\
Z_{m:m:n} & \text{if \ \ \ $Z_{k:m:n}<Z_{m:m:n}<T$}.  \\
\end{cases}
$$
It is to be noted that, the likelihood contribution at the point $(z,\delta=1)$ is given by,
$$
L(\theta_1,\theta_2|(z,\delta=1))=\frac{1}{\theta_1}e^{-\frac{1}{\theta_1} z} e^{-\frac{1}{\theta_2} z}=\frac{1}{\theta_1}e^{-(\frac{1}{\theta_1}+\frac{1}{\theta_2})z}.
$$
Similarly, the likelihood contribution at the point $(z,\delta=0)$ is given by,
$$
L(\theta_1,\theta_2|(z,\delta=0))=\frac{1}{\theta_2}e^{-\frac{1}{\theta_2} z} e^{-\frac{1}{\theta_1} z}=\frac{1}{\theta_2}e^{-(\frac{1}{\theta_1}+\frac{1}{\theta_2})z}.
$$
Hence, based on the observations, the likelihood function can be written as,
\begin{equation*}
L(\theta_1, \theta_2|Data)= c \Big(\frac{1}{\theta_1}\Big)^{D_1} \Big(\frac{1}{\theta_2}\Big)^{D_2} e^{-\frac{W}{\theta}},
\end{equation*}
where,
$$
c=\begin{cases}
\prod_{v=1}^J \gamma_v, & \text{if $T^*=T$,}\\
\prod_{v=1}^k \gamma_v, & \text{if $T^*=Z_{k:m:n}$,}\\
\prod_{v=1}^m \gamma_v, & \text{if $T^*=Z_{m:m:n}$,}\\
\end{cases}
$$
$$
%\text{For $v \in \{1,2,\ldots,m\}$},
\gamma_v=n-v+1\sum_{u=1}^{v-1}R_u, ~~~~~ v \in \{1,2,\ldots,m\},
$$
$$
W=\begin{cases}
\sum_{i=1}^{J} z_i (1+R_i) + T R^*_J, & \text{if $T^*=T$,}\\
\sum_{i=1}^{k-1} z_i (1+R_i) + z_k (1+R^*_k), & \text{if $T^*=Z_{k:m:n}$,}\\
\sum_{i=1}^{m} z_i (1+R_i), & \text{if $T^*=Z_{m:m:n}$.}\\
\end{cases}
$$
\noindent Note that, $\gamma_v$ denotes the number of units remaining at the time of $v$-th failure.
\noindent Clearly, $J=k,k+1,\ldots,m-1$ when $T^*=T$. Thus the log likelihood function (ignoring the constant $c$) is given by,
\begin{equation}\label{loglikelihood}
%$$
l(\theta_1, \theta_2)= -D_1 ~\log ~ \theta_1 - D_2~ \log ~ \theta_2 - \frac{W}{\theta}.
%$$
\end{equation}
Note that the MLE of $\theta_1$ ($\theta_2$) exists if $D_1 >0$ ($D_2>0$).  If $D_1=0$ ($D_2$ = 0) then the likelihood function 
is unbounded above as a function of $\theta_1$ ($\theta_2$) and thus the MLE of $\theta_1$ ($\theta_2$) does not exist.  
Taking partial derivatives of equation (\ref{loglikelihood}) with respect to $\theta_1$ and $\theta_2$ and equating them to $0$ 
we get MLEs of $\theta_1$ (given, $D_1>0$) and $\theta_2$ (given, $D_2>0$) as
\begin{equation}
\widehat{\theta}_1=\frac{W}{D_1} \ \ \hbox{and} \ \ \widehat{\theta}_2=\frac{W}{D_2}.
\end{equation}
%\newpage
\section{\sc Main Result: Exact Conditional distributions of the MLEs}
In this section we provide the exact marginal distributions of the MLEs of the parameters $\theta_1$ and $\theta_2$ 
conditioning on $D_1>0$ and $D_2>0$, respectively.  
\enlargethispage{1 cm}
\begin{theorem} \label{exact_theta_1} The conditional PDF of $\widehat{\theta}_1$ conditioning on $D_1>0$, is given by
\begin{align*}
f_{\widehat{\theta}_1|D_1>0}(x)&=\frac{1}{P(D_1>0)}\Bigg[\sum_{j=k}^{m-1} \sum_{i=1}^{j}\Bigg\{ \prod_{v=1}^j \gamma_v \bigg(\frac{1}{\theta_1}\bigg)^i\bigg(\frac{1}{\theta_2}\bigg)^{j-i} \Big(\frac{1}{\theta}\Big)^{-j}\\
& \quad \sum_{v=0}^j \frac{(-1)^v e^{-\frac{T}{\theta}\gamma_{j-v+1}}}{\{\prod_{h=1}^v (\gamma_{j+1-v}-\gamma_{j+1-v+h})\}\{\prod_{h=1}^{j-v}  (\gamma_h-\gamma_{j-v+1})\}}\\
& \quad f_G\Big(x;\frac{T}{i}\gamma_{j-v+1},j,\frac{i}{\theta}\Big)\Bigg\}\\
& \quad +\sum\nolimits_{i=1}^k \Bigg\{\prod_{v=1}^k \gamma_v \bigg(\frac{1}{\theta_1}\bigg)^i\bigg(\frac{1}{\theta_2}\bigg)^{k-i} \Big(\frac{1}{\theta}\Big)^{-k}\\
& \quad \sum_{v=0}^{k-1} \frac{(-1)^ve^{-\frac{T}{\theta}\gamma_{k-v}}}{\{\prod_{j=1}^v (\gamma_{k-v}-\gamma_{k-v+j})\}\{\prod_{j=1}^{k-1-v} (\gamma_j-\gamma_{k-v})\}\gamma_{k-v}}\\
& \quad  f_G\Big(x;\frac{T}{i}\gamma_{k-v},k,\frac{i}{\theta}\Big)\Bigg\}\\
& \quad+\sum\nolimits_{i=1}^m \Bigg\{\prod_{v=1}^m \gamma_{v} \bigg(\frac{1}{\theta_1}\bigg)^i\bigg(\frac{1}{\theta_2}\bigg)^{m-i} \Big(\frac{1}{\theta}\Big)^{-m}\\
& \quad \sum_{v=0}^{m} \frac{(-1)^v e^{-\frac{T}{\theta}(\gamma_{m-v+1}-\gamma_{m+1})}}{\{\prod_{j=1}^v  (\gamma_{m-v+1}-\gamma_{m-v+j+1})\}\{\prod_{j=1}^{m-v} (\gamma_j-\gamma_{m-v+1})\}}\\
& \quad f_G\Big(x;\frac{T}{i}(\gamma_{m-v+1}-\gamma_{m+1}),m,\frac{i}{\theta}\Big)\Bigg\}\Bigg],
\end{align*}
\end{theorem}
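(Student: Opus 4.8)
The plan is to condition on the value of $T^*$, which splits the sample space into the three regimes for $T^*$: $T^*=T$ (equivalently $J=j$ for some $j\in\{k,\ldots,m-1\}$), $T^*=Z_{k:m:n}$ (equivalently $T<Z_{k:m:n}$), and $T^*=Z_{m:m:n}$ (equivalently $Z_{m:m:n}<T$). Since $\widehat\theta_1$ is defined only on $\{D_1>0\}$, it is enough to compute, for each regime, the sub-density of $\widehat\theta_1=W/D_1$ on that regime intersected with $\{D_1>0\}$, add the three pieces, and divide by $P(D_1>0)$ (which can either be obtained by integrating the sub-densities or read off from the requirement that the conditional density integrate to one). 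The three pieces will turn out to be exactly the three blocks in the statement.

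Fix one regime, say $T^*=T$ with $J=j$, and condition further on $D_1=i$, $1\le i\le j$. On this event the data are $0<z_1<\cdots<z_j<T$ together with the cause labels, and for a \emph{fixed} labelling having $i$ Cause-1 failures the joint density is, from the likelihood expression in Section~2, $\big(\prod_{v=1}^j\gamma_v\big)\theta_1^{-i}\theta_2^{-(j-i)}e^{-W/\theta}$ with $W=\sum_{v=1}^j(1+R_v)z_v+\gamma_{j+1}T$; this can also be re-derived directly by the usual spacings/memorylessness argument, using $1+R_v=\gamma_v-\gamma_{v+1}$ and $R^*_J=\gamma_{J+1}$. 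Summing over the $\binom{j}{i}$ admissible labellings and then pushing the law of $W$ forward by $x=W/i$ reduces the task to finding the distribution of $W$ over the simplex $\{0<z_1<\cdots<z_j<T\}$.

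The computational heart of the argument is the identity
\[
\int_{0<z_1<\cdots<z_j<T}e^{-\beta\sum_{v=1}^j(1+R_v)z_v}\,dz_1\cdots dz_j=\sum_{v=0}^j\frac{(-1)^v\,e^{-\beta(\gamma_{j-v+1}-\gamma_{j+1})T}}{\beta^j\{\prod_{h=1}^v(\gamma_{j+1-v}-\gamma_{j+1-v+h})\}\{\prod_{h=1}^{j-v}(\gamma_h-\gamma_{j-v+1})\}},
\]
which I would establish by induction on $j$, integrating out the innermost variable and collecting the telescoping partial-fraction terms; the companion identity over the truncated region $\{0<z_1<\cdots<z_{k-1}<z_k,\ z_k>T\}$ --- obtained by integrating $z_k$ outermost over $(T,\infty)$ and applying the same lemma to the inner $(k-1)$-simplex --- produces the extra $\gamma_{k-v}$ in the denominator of the second block. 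Taking $\beta=s+\tfrac1\theta$, multiplying by $e^{-\beta\gamma_{j+1}T}$ to restore the shift in $W$, and reading the result as a Laplace transform in $s$, each summand $e^{-\beta\gamma_{j-v+1}T}/\beta^j$ inverts to $e^{-w/\theta}(w-\gamma_{j-v+1}T)^{j-1}/(j-1)!$ on $\{w>\gamma_{j-v+1}T\}$; substituting $w=ix$, the Jacobian cancels an $i$ and a short rearrangement turns this into $\big(\tfrac1\theta\big)^{-j}e^{-\gamma_{j-v+1}T/\theta}\,f_G\!\big(x;\tfrac T i\gamma_{j-v+1},j,\tfrac i\theta\big)$, which is precisely the shape of the terms in the theorem. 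Summing over $i$ and then over $j$, and repeating the same three steps for $T^*=Z_{k:m:n}$ (with $j$ replaced by $k$, on the truncated region above) and for $T^*=Z_{m:m:n}$ (with $j$ replaced by $m$, $\gamma_{m+1}=0$, on the full simplex $\{0<z_1<\cdots<z_m<T\}$) assembles the three blocks.

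I expect the induction for the integration identity to be the main obstacle: keeping the denominators in exactly the claimed split form $\{\prod_{h=1}^v(\gamma_{j+1-v}-\gamma_{j+1-v+h})\}\{\prod_{h=1}^{j-v}(\gamma_h-\gamma_{j-v+1})\}$ requires care, and one must correctly carry the ``boundary'' summands (such as the one with shift $\gamma_1T=nT$, which lies outside the actual support of $\widehat\theta_1$ but still appears in the partial-fraction expansion). Everything downstream --- the Laplace inversion, the change of variable $w=ix$, the rewriting in terms of $f_G$, and the final normalization by $P(D_1>0)$ --- is routine bookkeeping.
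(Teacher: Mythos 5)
Your proposal follows essentially the same route as the paper: decompose over the three regimes of $T^*$ and the value of $D_1=i$, write the joint density on the ordered simplex from the likelihood, evaluate the simplex integral via the partial-fraction identity (the paper cites it as Lemma~1 of Balakrishnan, Childs and Chandrasekar rather than proving it by induction), and invert the resulting transform termwise into shifted gamma densities under $w=ix$ --- your Laplace-transform step with $\beta=s+\tfrac1\theta$ is exactly the paper's conditional moment generating function computation with $s=-t/i$. The one substantive point of divergence is your explicit ``sum over the $\binom{j}{i}$ admissible labellings'': carried through literally, this puts a multiplicity factor $\binom{j}{i}$ (resp.\ $\binom{k}{i}$, $\binom{m}{i}$) in front of each block, whereas the paper's Lemma~1 identifies the joint sub-density on $\{J=j,D_1=i\}$ with the likelihood of a single labelling and the stated theorem carries no such factor; since $\sum_i \binom{j}{i}\theta_1^{-i}\theta_2^{-(j-i)}\theta^j=1$ is what makes the conditional probabilities sum to one, your bookkeeping appears to be the consistent one, and you should reconcile this normalization explicitly rather than treat it as routine.
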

\noindent where,
\begin{align*}
P(D_1>0)&=1-P(D_1=0)
\end{align*}
and
\begin{align*}
P(D_1=0)=&\sum_{j=k}^m \prod_{v=1}^{j+1} \sum_{u=0}^{j}\frac{\gamma_v(-1)^u e^{-\frac{T}{\theta}\gamma_{j-u+1}}}{\{\prod_{v=1}^u  (\gamma_{j+1-u}-\gamma_{j+1-u+v})\}\{\prod_{v=1}^{j-u} (\gamma_v-\gamma_{j-u+1})\}\{\gamma_{j-u+1}-u\}}\Big(\frac{\theta_1}{\theta_1+\theta_2}\Big)^j\\
&+\prod_{v=1}^k \sum_{v=0}^{k-1} \frac{\gamma_v(-1)^v e^{-\frac{T}{\theta}\gamma_{k-v}}}{\{\prod_{j=1}^v  (\gamma_{k-v}-\gamma_{k-v+j})\}\{\prod_{j=1}^{k-1-v}  (\gamma_j-\gamma_{k-v})\}\{\gamma_{k-v}\}}\Big(\frac{\theta_1}{\theta_1+\theta_2}\Big)^k \\
&+\prod_{v=1}^m \sum_{v=0}^{m} \frac{\gamma_v(-1)^v e^{-\frac{T}{\theta}(\gamma_{m-v+1}-\gamma_{m+1})}}{\{\prod_{j=1}^v  (\gamma_{m-v+1}-\gamma_{m-v+j+1})\}\{\prod_{j=1}^{m-v} (\gamma_j-\gamma_{m-v+1})\}}\Big(\frac{\theta_1}{\theta_1+\theta_2}\Big)^m.\\
\end{align*}
%\end{theorem}
\begin{proof}
See in the Appendix.
\end{proof}

\noindent Similarly, the conditional distribution of $\widehat{\theta}_2$ given $D_2>0$ is obtained by replacing $D_1$ and  $\theta_1$ 
by $D_2$ and $\theta_2$ respectively.

\noindent {\bf Comment:} The conditional PDFs of $\widehat{\theta}_1$ and $\widehat{\theta}_2$ are quite complicated.  To get
some idea about the shape of the PDFs, we plot the conditional PDF of $\widehat{\theta}_1$ and $\widehat{\theta}_2$ in Figure
\ref{pdf-theta-1} and Figure \ref{pdf-theta-2}, respectively for $n$ = 20, $k$ = 5, $m$ = 18, $T$ = 1.2, $\theta_1$ = 1.0, 
$\theta_2$ = 1.3, and for the censoring Scheme-III (see Section 6 for details).  We have also plotted the histograms of 
$\widehat{\theta}_1$ and $\widehat{\theta}_2$ based on 5000 replications on the same graph, and they match very well.  It verifies
the correctness of the expressions of the conditional distributions of $\widehat{\theta}_1$ and $\widehat{\theta}_2$.

\begin{figure}
\includegraphics[scale=0.5]{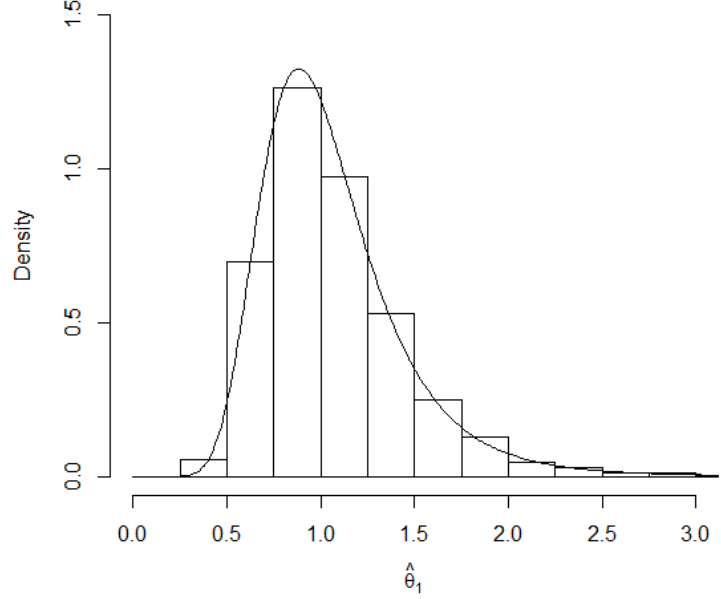}
\centering
\captionsetup{font=scriptsize}
\caption{Histogram of $\widehat{\theta}_1$ along its PDF    \label{pdf-theta-1}}
\end{figure}

%\enlargethispage{1 cm}
\begin{figure}
\includegraphics[scale=0.5]{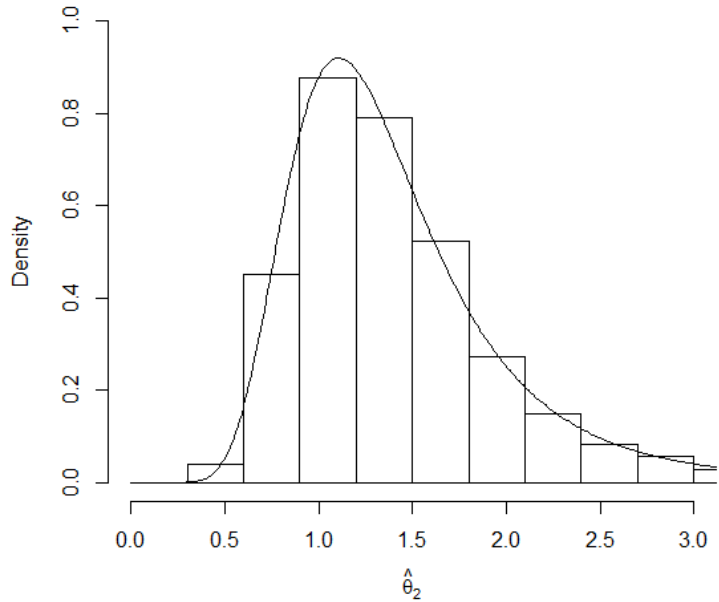}
\centering
\captionsetup{font=scriptsize}
\caption{Histogram of $\widehat{\theta}_2$ along its PDF   \label{pdf-theta-2}}
\end{figure}

\section{\sc Confidence interval}

From the exact conditional distribution of $\widehat{\theta}_1$, the exact CI can be constructed under the assumption 
that $\displaystyle P_{\theta_1}(\widehat{\theta}_1 \leq x)$ is a monotonic decreasing function of $\theta_1$ for fixed $x$.  Several 
authors including Balakrishnan et al. \cite{xie}, Chen and Bhattayacharya \cite{chen}, Childs et al. \cite{childs}, 
Kundu and Basu \cite{basu} used this technique to construct CI for the parameters.  Although we cannot prove the 
monotonicity property of $\displaystyle P_{\theta_1}(\widehat{\theta}_1 \leq x)$ analytically, a graphical plot supports this 
property, see Figure \ref{fig-1} and Figure \ref{fig-2}.
\begin{figure}
\includegraphics[scale=0.5]{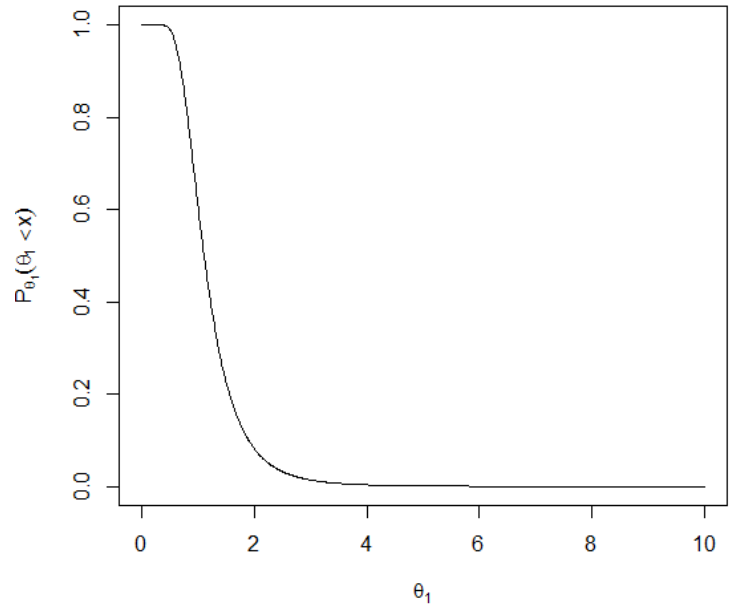}
\centering
\captionsetup{font=scriptsize}
\caption{The plot of $P_{\theta_1}(\widehat{\theta}_1 \le x)$ for $ n=20, k=3, m=14, T=1.2, x=1.109, \theta_2=1.472$ under Scheme-I   \label{fig-1}}
\end{figure}
\begin{figure}
\includegraphics[scale=0.5]{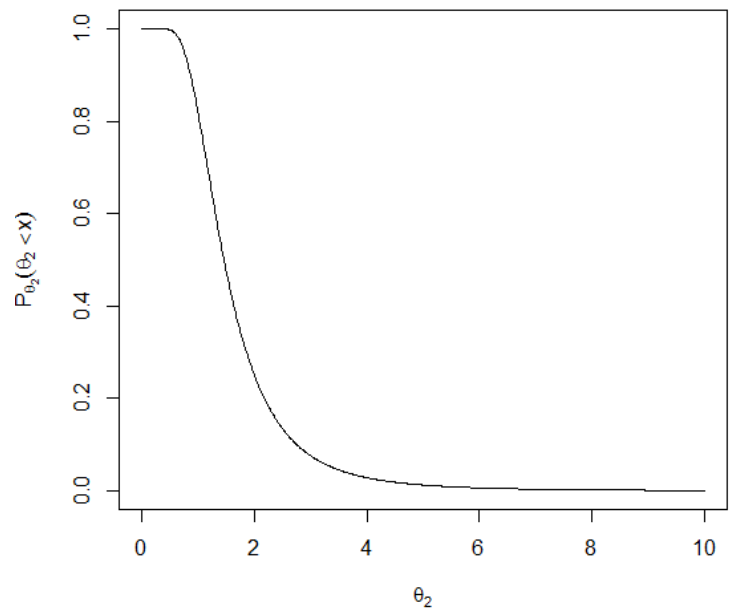}
\centering
\captionsetup{font=scriptsize}
\caption{The plot of $P_{\theta_2}(\widehat{\theta}_2 \le x)$ for $ n=20, k=3, m=14, T=1.2, x=1.472, \theta_1=1.109$ under Scheme-I  \label{fig-2}}
\end{figure}

The confidence interval of $\theta_1$ can be obtained as follows, see Kundu and Basu \cite{basu} for details.
Suppose, $\widehat{\theta}_{1,obs}$ is an estimated value of $\theta_1$. Then for $\alpha \in (0,1)$, a $100(1-\alpha)\%$ exact CI is given by $(\theta_{1l}, \theta_{1u})$ where, $\theta_{1l}$ and $\theta_{1u}$ are obtained from the following two equations,
\begin{equation}
P_{\theta_{1l}}(\widehat{\theta}_1 \leq \widehat{\theta}_{1,obs})=1-\frac{\alpha}{2} \ \ \ \ \hbox{and} \ \ \ \
P_{\theta_{1u}}(\widehat{\theta}_1 \leq \widehat{\theta}_{1,obs})=\frac{\alpha}{2}.
\end{equation}
Clearly due to complicated nature of PDF of $\widehat{\theta}_1$, the above two equations turn out to be non linear 
equations.  One needs to solve them numerically, say for example by Newton Rapshon method or Bisection method.   
Similarly we can obtain the confidence interval for $\theta_2$ also.  

It should be mentioned that the exact confidence intervals 
of $\theta_1$ and $\theta_2$ may not always exist, see for example Balakrishnan et al. \cite{BCI:2014} for all $0 < \alpha < 1$.  In fact it is clear that if 
$\displaystyle P_{\theta_1}(\widehat{\theta}_1 \le \widehat{\theta}_{1,obs})$ varies from 1 to 0, as $\theta_1$ varies from 0 to infinity, for
all values of $\widehat{\theta}_{1,obs}$, then the exact confidence interval of $\theta_1$ exists for all values of $0 < \alpha < 1$.
Similarly, for $\theta_2$ also.  It is not difficult to show that $\displaystyle
\lim_{\theta_1 \rightarrow 0} P_{\theta_1}(\widehat{\theta}_1 \le \widehat{\theta}_{1,obs}) = 1$ and $\displaystyle
\lim_{\theta_2 \rightarrow 0} P_{\theta_2}(\widehat{\theta}_2 \le \widehat{\theta}_{2,obs}) = 1.$
But we could not establish the following limits $\displaystyle 
\lim_{\theta_1 \rightarrow \infty} P_{\theta_1}(\widehat{\theta}_1 \le \widehat{\theta}_{1,obs})$ and 
$\displaystyle \lim_{\theta_2 \rightarrow \infty} P_{\theta_2}(\widehat{\theta}_2 \le \widehat{\theta}_{2,obs})$.
In fact it has been observed in our simulation experiments that these limits may not be 0 always, and in those cases the exact finite length
confidence intervals do not exist for all $0 < \alpha < 1$.  We have demonstrated that 
in Section 6.

Due to complicated functional form of the exact conditional distribution function, it is quite difficult to obtain exact confidence intervals of the parameters in practice.  Hence, we propose to use bootstrap method for construction of CIs of the unknown parameters. The steps are similar as given in Kundu and Gupta \cite{gupta}, hence they are omitted.

\section{\sc Bayesian Analysis}

In this section we consider the Bayesian inference of the unknown parameters.  We obtain the Bayes estimates and the highest 
posterior density (HPD) credible intervals of the unknown parameters.  Suppose,  
$\displaystyle \lambda_1=\frac{1}{\theta_1}$ and $\displaystyle \lambda_2=\frac{1}{\theta_2}$, 
we take a joint prior distribution on $(\lambda_1, \lambda_2)$ as the Beta-gamma distribution, 
see for example Pena and Gupta \cite{pena}, with parameters $b_0>0, a_0>0, a_1>0, a_2>0$.  From now on it will be denoted by 
BG$(b_0,a_0,a_1,a_2)$ and it has the joint PDF for $\lambda_1, \lambda_2 > 0$ as follows;
\begin{align}
\pi(\lambda_1, \lambda_2|b_0, a_0, a_1, a_2)=\frac{\Gamma(a_1+a_2)}{\Gamma(a_0)}(b_0(\lambda_1+\lambda_2))^{a_0-a_1-a_2} \frac{b_0^{a_1}}{\Gamma(a_1)} \lambda_1^{a_1-1} e^{-b_0\lambda_1} \frac{b_0^{a_2}}{\Gamma(a_2)} \lambda_2^{a_2-1} e^{-b_0\lambda_2}.   \label{prior}
\end{align}
The joint posterior distribution of $(\lambda_1, \lambda_2)$ is given by,
\begin{align*}
\pi(\lambda_1, \lambda_2|data) \propto e^{-(W+b_0)(\lambda_1+\lambda_2)} \lambda_1^{a_1+D_1-1} \lambda_2^{a_2+D_2-1} (\lambda_1+\lambda_2)^{a_0-a_1-a_2};
\ \ \ \ \lambda_1 > 0, \lambda_2 > 0,
\end{align*}
and it is BG$(W+b_0,a_0+D_1+D_2,a_1+D_1,a_2+D_2)$.
The Bayes estimates of $\theta_1$ and $\theta_2$ under square error loss function are, respectively,
\begin{align*}
\widehat{\theta}_{1B} & = E\Big(\frac{1}{\lambda_1}|data\Big)=\frac{(a_1+a_2+D_1+D_2-1)(w+b_0)}{(a_0+D_1+D_2-1)(a_1+D_1-1)}
\end{align*}
and
\begin{align*}
\widehat{\theta}_{2B}=E\Big(\frac{1}{\lambda_2}|data\Big)=\frac{(a_1+a_2+D_1+D_2-1)(w+b_0)}{(a_0+D_1+D_2-1)(a_2+D_2-1)}.
\end{align*}
The corresponding posterior variances are, respectively, 
\begin{align*}
V(\theta_1|Data)=A_1B_1 \ \ \text{and} \ \ \ V(\theta_2|Data)=A_2B_2,
\end{align*}
where for $k=1,2$,
$$
A_k=\frac{(w+b_0)^2(a_1+a_2+D_1+D_2-1)}{(a_0+D_1+D_2-1)(a_k+D_k-1)}
$$
and
$$
B_k=\frac{(a_1+a_2+D_1+D_2-2)}{(a_0+D_1+D_2-2)(a_k+D_k-2)}-\frac{(a_1+a_2+D_1+D_2-1)}{(a_0+D_1+D_2-1)(a+k+D_k-1)}.
$$
\subsection{\sc credible set}
Let us recall that for $\alpha \in (0,1)$, a set $C_\alpha$ is said to be a $100(1-\alpha)\%$ credible set for ($\lambda_1,\lambda_2$) 
if $P[(\lambda_1,\lambda_2)\in C_\alpha]=1-\alpha$, where ($\lambda_1,\lambda_2$) $\sim \pi(\lambda_1,\lambda_2|data)$.
To construct a  $100(1-\alpha)\%$ credible set we use the following theorem.  The proof is quite straightforward, hence it is not 
provided here.
\begin{theorem}
If $(\lambda_1,\lambda_2)\sim BG(b_0+W, a_0+J, a_1+D_1, a_2+D_2)$ then,
$$
U=\lambda_1+\lambda_2 \sim Gamma(a_0+D_1+D_2, b_0+W), \ \ \ \ V=\frac{\lambda_1}{\lambda_1+\lambda_2} \sim Beta (a_1+D_1, a_2+D_2)
$$ 
and they are independently distributed.
\end{theorem}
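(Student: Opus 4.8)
The plan is to prove the statement by a single bivariate change of variables, taking advantage of the fact that the Beta-Gamma density is already written as a product of elementary factors. First I would write down the joint density of $(\lambda_1,\lambda_2)\sim BG(b_0+W,\,a_0+J,\,a_1+D_1,\,a_2+D_2)$ by specializing the density in (\ref{prior}): replace $b_0$ by $b_0+W$, $a_0$ by $a_0+J$, $a_1$ by $a_1+D_1$, and $a_2$ by $a_2+D_2$. Here I use the identity $J=D_1+D_2$ (every observed failure is attributed to cause $1$ or cause $2$), so the first Gamma shape parameter is $a_0+D_1+D_2$. Then I introduce the map $u=\lambda_1+\lambda_2$, $v=\lambda_1/(\lambda_1+\lambda_2)$, which is a bijection from $(0,\infty)^2$ onto $(0,\infty)\times(0,1)$ with inverse $\lambda_1=uv$, $\lambda_2=u(1-v)$ and Jacobian $|\partial(\lambda_1,\lambda_2)/\partial(u,v)|=u$.

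Next I would perform the substitution and collect terms. The factor $(b_0+W)(\lambda_1+\lambda_2)$ becomes $(b_0+W)u$; the product $\lambda_1^{a_1+D_1-1}\lambda_2^{a_2+D_2-1}$ becomes $u^{a_1+a_2+D_1+D_2-2}v^{a_1+D_1-1}(1-v)^{a_2+D_2-1}$; the two exponential factors combine into $e^{-(b_0+W)u}$; and the Jacobian supplies one more power of $u$. Gathering the powers of $u$ gives exactly $u^{(a_0+D_1+D_2)-1}$, and gathering the normalizing constants produces $(b_0+W)^{a_0+D_1+D_2}/\Gamma(a_0+D_1+D_2)$ together with $\Gamma(a_1+a_2+D_1+D_2)/[\Gamma(a_1+D_1)\Gamma(a_2+D_2)]$.

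The upshot is that the joint density of $(U,V)$ factors on the rectangle $(0,\infty)\times(0,1)$ as
$$
\left[\frac{(b_0+W)^{a_0+D_1+D_2}}{\Gamma(a_0+D_1+D_2)}\,u^{a_0+D_1+D_2-1}e^{-(b_0+W)u}\right]\left[\frac{\Gamma(a_1+a_2+D_1+D_2)}{\Gamma(a_1+D_1)\Gamma(a_2+D_2)}\,v^{a_1+D_1-1}(1-v)^{a_2+D_2-1}\right],
$$
which is precisely the Gamma$(a_0+D_1+D_2,\,b_0+W)$ density in $u$ times the Beta$(a_1+D_1,\,a_2+D_2)$ density in $v$. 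Since a joint density that factors as $g(u)h(v)$ over a product set forces the two coordinates to be independent with the indicated marginals, the theorem follows. There is no genuine obstacle here; the only point requiring care is the bookkeeping of exponents and constants, and checking that the shape/rate conventions in the paper's definitions of Gamma and Beta line up with the factors that emerge — so I would verify the exponent arithmetic $ (a_0+J)-(a_1+D_1)-(a_2+D_2)+(a_1+a_2+D_1+D_2-2)+1 = a_0+D_1+D_2-1$ explicitly once.
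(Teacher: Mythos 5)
Your proof is correct, and it is precisely the ``straightforward'' argument the paper alludes to when it omits the proof: substitute the updated parameters into the Beta--Gamma density (\ref{prior}), change variables to $(u,v)=(\lambda_1+\lambda_2,\ \lambda_1/(\lambda_1+\lambda_2))$ with Jacobian $u$, and observe that the joint density factors into the stated Gamma and Beta densities over the product domain $(0,\infty)\times(0,1)$. Your exponent bookkeeping, the use of $J=D_1+D_2$ to reconcile the two forms of the first shape parameter, and the independence conclusion from the factorization are all in order.
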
 
Using the above theorem we construct a  $100(1-\alpha)\%$ credible set of ($\lambda_1,\lambda_2$) as follows.
Let $0<\alpha_1, \alpha_2<1$ such that $(1-\alpha_1)(1-\alpha_2)=1-\alpha$. Then $C_\alpha$ can be constructed as,
$$
C_{\lambda_1,\lambda_2,\alpha}=\{(\lambda_1, \lambda_2): A<\lambda_1+\lambda_2<B, C<\frac{\lambda_1}{\lambda_1+\lambda_2}<D\}.
$$ 
with 
$\displaystyle P(A<\lambda_1+\lambda_2<B)=1-\alpha_1$ and $\displaystyle P(C<\frac{\lambda_1}{\lambda_1+\lambda_2}<D)=1-\alpha_2$.

\noindent Note that, $C_{\lambda_1,\lambda_2,\alpha}$ is a trapezoid enclosed by the following four straight lines:
\begin{equation}
(i)\ \lambda_1+\lambda_2 = A, \ \ (ii) \ \lambda_1+\lambda_2 = B, \ \ (iii) \ \lambda_1(1-D) = \lambda_2 D, \ \ (iv) \ \lambda_1(1-C) 
= \lambda_2 C.  
\end{equation}
Equivalently, we can get a 100$(1-\alpha)\%$ credible set of $(\theta_1, \theta_2)$ as follows
$$
C_{\theta_1,\theta_2,\alpha}=\{(\theta_1, \theta_2): A< \frac{\theta_1+\theta_2}{\theta_1 \theta_2} <B, C<\frac{\theta_2}{\theta_1+\theta_2}<D\}.
$$ 
Hence, $\displaystyle C_{\theta_1,\theta_2,\alpha}$ is an area enclosed by the following four lines (two of them are straight lines and two of them are
curved lines).
$$
(i)\ \theta_1+\theta_2 = A \theta_1 \theta_2, \ \ (ii) \ \theta_1+\theta_2 = B \theta_1 \theta_2, \ \ 
(iii) \ \theta_1 D = \theta_2 (1- D), \ \ (iv) \ \theta_1 C = \theta_2 (1- C).  
$$
Although, we can obtain a $100(1-\alpha)\%$ credible set of $(\lambda_1, \lambda_2)$ or of $(\theta_1, \theta_2)$ by the above method, 
it is not possible to obtain the HPD credible intervals for any function of $\lambda_1$, $\lambda_2$ or of $\theta_1, \theta_2$. 
  We propose to use the Gibbs sampling technique to construct the $100(1-\alpha)\%$ HPD credible interval 
of any function of $\lambda_1$ and $\lambda_2$, say, $g(\lambda_1, \lambda_2)$.  Shrijita \textit{et al.} \cite{shrijita} gave similar algorithm to find HPD credible interval of $g(\lambda_1,\lambda_2)$ and hence are omitted here.

%\enlargethispage{1 in}
\section{\sc Simulation and Data Analysis}
\subsection{\sc Simulation}

In this section we present some simulation results mainly to see how the different methods proposed in this paper behave in 
practice.  We have kept $\theta_1 = 1$, $\theta_2 = 1.3$, $T$ = 1.2 fixed.  We have considered different sample sizes ($n$), 
different effective sample sizes ($m$), different $k$ values and three different progressive censoring schemes namely; 
Scheme-I ($R_1 = n-m$, $R_2 = \ldots = R_m$ = 0),  Scheme-II ($R_{m/2}$ = 0, $R_1 = \ldots = R_{m/2-1} = R_{m/2+1} = \ldots = R_m$ = 0)
and Scheme-III ($R_1 = \ldots = R_{m-1} = 0, R_m = n-m$).  We have considered the following different cases, see Table \ref{cases}.
\begin{table}[h!]
%\caption{ }
\centering
\caption{Different cases considered for simulation experiments.   \label{cases}}   

\vspace{0.25 in}
\scalebox{0.8}
{
\begin{tabular}{|c| c| c|}
\hline
Scheme-I	&	Scheme-II	&	Scheme-III\\
\hline
n=20, k=3, m=14	&	n=20, k=3, m=14	&	n=20, k=3, m=14\\
n=20, k=5, m=14	&	n=20, k=5, m=14	&	n=20, k=5, m=14\\
n=20, k=3, m=18	&	n=20, k=3, m=18	&	n=20, k=3, m=18\\
n=20, k=5, m=18	&	n=20, k=5, m=18	&	n=20, k=5, m=18\\
n=30, k=6, m=22	&	n=30, k=6, m=22	&	n=30, k=6, m=22\\
n=30, k=8, m=22	&	n=30, k=8, m=22	&	n=30, k=8, m=22\\
n=30, k=6, m=26	&	n=30, k=6, m=26	&	n=30, k=6, m=26\\
n=30, k=8, m=26	&	n=30, k=8, m=26	&	n=30, k=8, m=26\\
\hline
\end{tabular}
}
\end{table}       
In each case we compute the MLEs of the unknown parameters, and construct 
the 95 \% exact confidence intervals and the bootstrap confidence intervals  
for both the parameters.  We replicate the process 5000 times, and report the average biases, mean squared errors (MSE), the 
average length of the confidence intervals and the associated coverage percentages within brackets. 

One important point we would like to mention that a finite length of 95 \% exact confidence interval of each 
of the parameters does not exist for each replication. This happens when the number of failures due to a particular cause is 
very low (1 or 2 mostly).   We have reported the number of times the exact confidence intervals do not exist (NECI) also.  The results are 
presented in 
Table \ref{freq-theta-1} and Table \ref{freq-theta-2}.
We give few graphical evidences of such non existence of solutions [see Figure \ref{fig-7} and Figure \ref{fig-8}] .  

To perform the Bayesian analysis we consider two different priors; Prior-I (informative) and Prior-II (matching prior).  For 
Prior-I, the hyper parameters are $\displaystyle a_0 =46/13, b_0 = 1, a_1 = 2.3, a_2 = 2$, and they have been chosen in such a
manner that $E(\lambda_1) = 1$ and $E(\lambda_2) = 1/1.3$, the true values of the parameters considered for simulation experiments. 
For Prior-II, the hyper parameters are $a_0=2, b_0=0, a_1=1, a_2=1$, and they have been chosen in such a manner that the Bayes 
estimates match with the corresponding MLEs.  In this case also similarly as before, in each case we compute the Bayes estimates 
and the symmetric and HPD credible intervals.  We replicate the process 5000 times, and report the average biases, MSEs, the average
lengths of the credible intervals and the associated coverage percentages within brackets.  The results are presented in 
Table  \ref{informative-theta-1}, Table  \ref{non-informative-theta-1}, Table  \ref{informative-theta-2} and 
Table  \ref{non-informative-theta-2}.   

Some of the points are quite clear from these simulation results.  It is observed that the performance of the MLEs are quite 
satisfactory.  As the sample size increase the biases and the MSEs decrease as expected.  Comparing the performances of the 
different confidence intervals it is observed that the performances of the exact confidence intervals and the bootstrap confidence
intervals are quite good.  In both these cases coverage percentage are maintained.  Since the average lengths of the bootstrap 
confidence intervals are slightly smaller than the corresponding average lengths of the exact confidence intervals we propose to use
the bootstrap confidence intervals for all practical purposes, and they are very easy to implement also.

Clearly, from Figure \ref{fig-7} and Figure \ref{fig-8} we see that upper confidence limits of the parameters $\theta_1$ and $\theta_2$ do not exist for large estimates of the corresponding parameters. In other words, it can be said that finite length of confidence interval does not exist for $\theta_1$ when $D_1$ is very small. Similar case holds for $\theta_2$ also.

The performance of the Bayes estimates also are quite satisfactory.  The average biases and MSEs of the Bayes estimates under 
Prior-II match with the corresponding values associated with the MLEs, as it should be.  In case of Prior-I the average biases
and the associated MSEs are slightly smaller than Prior-II.  Regarding the credible intervals it is clear that both the symmetric
and HPD credible intervals are performing quite well.  Since the coverage percentages for the symmetric credible intervals are 
closer to the nominal value, it might be preferable compared to the HPD credible intervals.  Hence, 
if we do not have any prior information, then use the Bayes estimates with the matching prior, and 
if we have any prior information, then use the informative prior.

\begin{figure}
\includegraphics[scale=0.5]{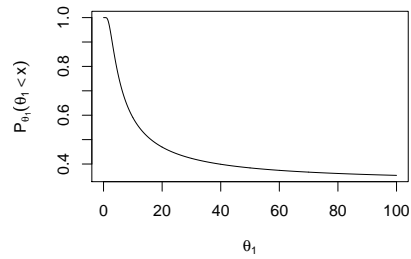}
\centering
\captionsetup{font=scriptsize}
\caption{The plot of $P_{\theta_1}(\widehat{\theta}_1 \le x)$ for $ n=20, k=3, m=14, T=1.2, x=7.549, \theta_2=0.755$ under Scheme-I   \label{fig-7}}
\end{figure}

\begin{figure}
\includegraphics[scale=0.5]{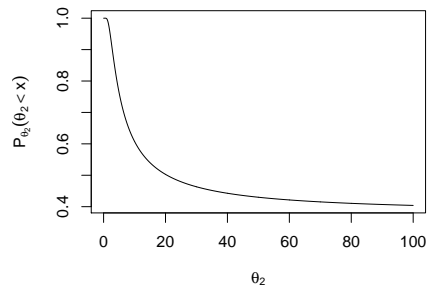}
\centering
\captionsetup{font=scriptsize}
\caption{The plot of $P_{\theta_2}(\widehat{\theta}_2 \le x)$ for $ n=20, k=3, m=14, T=1.2, x=6.864, \theta_1=0.624$ under Scheme-I   \label{fig-8}}
\end{figure}

%\newpage
\begin{table}[h!]
\captionsetup{font=scriptsize}
\caption {Results of frequentist analysis for $\theta_1$}
\label{freq-theta-1} \centering
\scalebox{0.65}
{
 \begin{tabular}{|c c c c c c c c c|}
\hline
n & m & k & \thead{Scheme} & \thead{Bias} & \thead{MSE} & \thead{Exact \\ CI} &	NECI	& \thead{Bootstrap \\ CI}   \\ [0.5ex]
 \hline\hline
 20 &   14  &   3   &   I   &   0.109   &   0.269   &   2.698 (96)  &	3	&   2.601 (95)  \\
    &       &       &   II  &   0.098   &   0.253   &   2.543 (95)  &	5	&   2.452 (94)  \\
    &       &       &   III &   0.068   &   0.200   &   2.173 (96)  &	3	&   2.082 (94)  \\
    &       &   5   &   I   &   0.108   &   0.255   &   2.678 (96)  &	7	&   2.604 (95)  \\
    &       &       &   II  &   0.092   &   0.248   &   2.602 (95)  &	3	&   2.425 (94)  \\
    &       &       &   III &   0.065   &   0.192   &   2.148 (95)  &	1	&   2.068 (94)  \\
    &	18	&	3	&	I	&	0.079	&	0.175	&	1.879 (96)	&	1	&	2.027 (94)	\\
    &		&		&	II	&	0.085	&	0.190	&	1.937 (96)	&	0	&	2.019 (94)	\\
    &		&		&	III	&	0.080	&	0.166	&	1.761 (95)	&	2	&	1.878 (94)	\\
    & 	    &   5   &   I   &   0.083   &   0.189   &   1.925 (95)  &	1	&   2.051 (94)  \\
    &       &       &   II  &   0.072   &   0.176   &   1.862 (95)  &	0	&   1.968 (94)  \\
    &       &       &   III &   0.066   &   0.155   &   1.726 (96)  &	0	&   1.847 (94)  \\
 30 &   22  &   6   &   I   &   0.067   &   0.145   &   1.584 (95)  &	0	&   1.681 (94)  \\
    &       &       &   II  &   0.063   &   0.138   &   1.543 (94)  & 	0	& 	1.614 (94)   \\
    &       &       &   III &   0.045   &   0.112   &   1.410 (94)  &   0	&	1.408 (93)  \\
    &       &   8   &   I   &   0.060   &   0.139   &   1.559 (95)  &   0	&	1.666 (94)   \\
    &       &       &   II  &   0.063   &   0.138   &   1.543 (94)  &   0	&	1.614 (94)    \\
    &       &       &   III &   0.045   &   0.112   &   1.410 (95)  &   0	&	1.408 (93)    \\
    &   26  &   6   &   I   &   0.053   &   0.110   &   1.371 (95)  &   0	&	1.437 (94)  \\
    &       &       &   II  &   0.051   &   0.104   &   1.349 (95)  &   0	&	1.400 (94)  \\
    &       &       &   III &   0.042   &   0.087   &   1.259 (96)  &   0	&	1.291 (95)   \\
    &       &   8   &   I   &   0.053   &   0.110   &   1.371 (95)  &   0	&	1.437 (94)  \\
    &       &       &   II  &   0.051   &   0.104   &   1.349 (95)  &   0	&	1.400 (94)  \\
    &       &       &   III &   0.042   &   0.087   &   1.259 (96)  &   0	&	1.291 (95)  \\
 \hline
 \end{tabular}
 }
\end{table}
\begin{table}[h!]
\captionsetup{font=scriptsize}
\caption {Results of Bayesian analysis under Prior-I for
$\theta_1$}
 \label{informative-theta-1} \centering
 \scalebox{0.7}
{
 \begin{tabular}{|c c c c c c c c|}
 \hline
 n & m & k & \thead{Scheme} & \thead{Bias} & \thead{MSE} & \thead{Symmetric \\ Credible Interval} & \thead{HPD \\ Credible Interval}  \\ [0.5ex]
 \hline\hline
 20 &   14  &   3   &   I   &   0.087   &   0.164   &   1.654 (96)  &   1.486 (95)\\
    &       &       &   II  &   0.080   &   0.159   &   1.616 (96)  &   1.457 (94)\\
    &       &       &   III &   0.059   &   0.135   &   1.496 (96)  &   1.360 (94)\\
    &       &   5   &   I   &   0.087   &   0.158   &   1.652 (97)  &   1.485 (95)\\
    &       &       &   II  &   0.074   &   0.153   &   1.606 (96)  &   1.447 (95) \\
    &       &       &   III &   0.057   &   0.131   &   1.490 (96)  &   1.356 (94)\\
    &	18	&	3	&	I	&	0.070	&	0.125	&	1.431 (96)	&	1.311 (95)\\
    &		&		&	II	&	0.074	&	0.131	&	1.436 (96)	&	1.316 (95)\\
    &		&		&	III	&	0.073	&	0.122	&	1.377 (96)	&	1.269 (95)\\
    &  	  	&   5   &   I   &   0.073   &   0.133   &   1.442 (96)  &   1.320 (95)\\
    &       &       &   II  &   0.064   &   0.125   &   1.414 (96)  &   1.296 (95)\\
    &       &       &   III &   0.061   &   0.116   &   1.358 (96)  &   1.252 (95)\\
 30 &   22  &   6   &   I   &   0.062   &   0.109   &   1.288 (96)  &   1.195 (94)\\
    &       &       &   II  &   0.059   &   0.106   &   1.266 (95)  &   1.176 (94)\\
    &       &       &   III &   0.045   &   0.090   &   1.178 (95)  &   1.102 (94)\\
    &       &   8   &   I   &   0.056   &   0.106   &   1.277 (95)  &   1.185 (94)  \\
    &       &       &   II  &   0.059   &   0.106   &   1.266 (95)  &   1.176 (94)  \\
    &       &       &   III &   0.045   &   0.090   &   1.178 (95)  &   1.102 (94)  \\
    &   26  &   6   &   I   &   0.052   &   0.088   &   1.168 (96)  &   1.095 (95)\\
    &       &       &   II  &   0.050   &   0.085   &   1.156 (95)  &   1.084 (95)\\
    &       &       &   III &   0.043   &   0.073   &   1.092 (96)  &   1.029 (95)\\
    &       &   8   &   I   &   0.052   &   0.088   &   1.168 (96)  &   1.095 (95)\\
    &       &       &   II  &   0.050   &   0.085   &   1.156 (95)  &   1.084 (95)\\
    &       &       &   III &   0.043   &   0.073   &   1.092 (96)  &   1.029 (95)\\
  \hline
 \end{tabular}
 }
\end{table}
\begin{table}[h!]
\captionsetup{font=scriptsize}
\caption {Results of Bayesian analysis under Prior-II for
$\theta_1$}
 \label{non-informative-theta-1} \centering
 \scalebox{0.7}
{
 \begin{tabular}{|c c c c c c c c|}
 \hline
  n & m & k & \thead{Scheme} & \thead{Bias} & \thead{MSE} & \thead{Symmetric \\ Credible Interval} & \thead{HPD \\ Credible Interval}  \\ [0.5ex]
 \hline\hline
 20 &   14  &   3   &   I   &   0.109   &   0.269   &   1.888 (94)  &   1.658 (92)\\
    &       &       &   II  &   0.098   &   0.253   &   1.832 (94)  &   1.614 (92)\\
    &       &       &   III &   0.068   &   0.200   &   1.653 (94)  &   1.478 (92)\\
    &       &   5   &   I   &   0.108   &   0.255	&   1.881 (94)  &   1.654 (93)\\
    &       &       &   II  &   0.092   &   0.248   &   1.819 (94)  &   1.604 (92)  \\
    &       &       &   III &   0.065   &   0.192   &   1.644 (94)  &   1.471 (92)\\
    &	18	&	3	&	I	&	0.079	&	0.175	&	1.565 (94)	&	1.414 (93)\\
    &		&		&	II	&	0.085	&	0.190	&	1.574 (94)	&	1.422 (93)\\
    &		&		&	III	&	0.080	&	0.166	&	1.491 (94)	&	1.358 (93)\\
    &     	&   5   &   I   &   0.083   &   0.189   &   1.579 (94)  &   1.425 (93)\\
    &       &       &   II  &   0.072   &   0.176   &   1.541 (94)  &   1.395 (93)\\
    &       &       &   III &   0.066   &   0.155   &   1.467 (94)  &   1.337 (93)\\
 30 &   22  &   6   &   I   &   0.067   &   0.145   &   1.379 (94)  &   1.268 (93)\\
    &       &       &   II  &   0.063   &   0.138   &   1.352 (94)  &   1.245 (93)\\
    &       &       &   III &   0.045   &   0.112   &   1.242 (93)  &   1.155 (92)\\
    &       &   8   &   I   &   0.060   &   0.139   &   1.365 (94)  &   1.255 (93)  \\
    &       &       &   II  &   0.063   &   0.138   &   1.352 (94)  &   1.245 (93)  \\
    &       &       &   III &   0.045   &   0.112   &   1.242 (93)  &   1.155 (92)  \\
    &   26  &   6   &   I   &   0.053   &   0.110   &   1.232 (94)  &   1.147 (94)\\
    &       &       &   II  &   0.051   &   0.104   &   1.216 (94)  &   1.134 (93)\\
    &       &       &   III &   0.042   &   0.087   &   1.139 (95)  &   1.069 (94)\\
    &       &   8   &   I   &   0.053   &   0.110   &   1.232 (94)  &   1.147 (94)\\
    &       &       &   II  &   0.051   &   0.104   &   1.216 (94)  &   1.134 (93)\\
    &       &       &   III &   0.042   &   0.087   &   1.139 (95)  &   1.069 (94)\\
 \hline
 \end{tabular}
 }
\end{table}

\begin{table}[h!]
\captionsetup{font=scriptsize}
\caption {Results of frequentist analysis for $\theta_2$}
\label{freq-theta-2} \centering
\scalebox{0.65}
{
 \begin{tabular}{|c c c c c c c c c|}
 \hline
 n & m & k & \thead{Scheme} & \thead{Bias} & \thead{MSE} & \thead{Exact \\ CI} & NECI	& \thead{Bootstrap \\ CI}   \\ [0.5ex]
 \hline\hline
 20	&	14	&	3	&	I	&	0.172	&	0.561	&	5.952 (96)	&		54	&	4.162 (95)	\\
 	&		&		&	II	&	0.170	&	0.543	&	5.604 (96)	&		45	&	4.029 (95)	\\
 	&		&		&	III	&	0.149	&	0.482	&	4.632 (96)	&		25	&	3.720 (95)	\\
 	&		&	5	&	I	&	0.138	&	0.498	&	5.497 (97)	&		59	&	4.040 (95)	\\
 	&		&		&	II	&	0.175	&	0.540	&	5.773 (96)	&		50	&	4.058 (95)	\\
 	&		&		&	III	&	0.152	&	0.477	&	4.649 (96)	&		27	&	3.724 (95)	\\
 	&	18	&	3	&	I	&	0.163	&	0.484	&	3.796 (96)	&		17	&	3.653 (94)	\\
 	&		&		&	II	&	0.169	&	0.509	&	3.910 (96)	&		21	&	3.575 (94)	\\
 	&		&		&	III	&	0.145	&	0.403	&	3.240 (96)	&		3	&	3.358 (94)	\\
  	&     	&   5   &   I   &   0.151   &   0.452   &   3.713 (96)  &   	23	&	3.595 (95)  \\
    &       &       &   II  &   0.167   &   0.517   &   3.846 (95)  &   	19	&	3.599 (94)  \\
    &       &       &   III &   0.144   &   0.458   &   3.440 (95)  &   	16	&	3.349 (93)   \\
 30 &   22  &   6   &   I   &   0.122   &   0.340   &   2.714 (96)  &   	5	&	2.982 (94)  \\
    &       &       &   II  &   0.122   &   0.346   &   2.687 (95)  &   	4	&	2.877 (94)  \\
    &       &       &   III &   0.098   &   0.276   &   2.361 (95)  &   	3	&	2.487 (94)  \\
    &       &   8   &   I   &   0.119   &   0.346   &   2.744 (96)  &   	3	&	2.986 (95)  \\
    &       &       &   II  &   0.122   &   0.346   &   2.687 (95)  &   	0	&	2.877 (94)  \\
    &       &       &   III &   0.098   &   0.276   &   2.361 (95)  &   	3	&	2.487 (94)  \\
    &   26  &   6   &   I   &   0.108   &   0.296   &   2.311 (95)  &   	0	&	2.551 (94)  \\
    &       &       &   II  &   0.109   &   0.294   &   2.278 (95)  &   	3	&	2.495 (94)  \\
    &       &       &   III &   0.093   &   0.255   &   2.074 (95)  &   	0	&	2.272 (93)  \\
    &       &   8   &   I   &   0.108   &   0.296   &   2.311 (95)  &   	0	&	2.551 (94)  \\
    &       &       &   II  &   0.109   &   0.294   &   2.278 (95)  &   	1	&	2.495 (94)  \\
    &       &       &   III &   0.093   &   0.255   &   2.074 (95)  &   	0	&	2.272 (93)  \\
 \hline
 \end{tabular}
 }
\end{table}
\begin{table}[h!]
\captionsetup{font=scriptsize}
\caption {Results of Bayesian analysis under Prior-I for
$\theta_2$} 
\label{informative-theta-2} \centering
\scalebox{0.7}
{
 \begin{tabular}{|c c c c c c c c|}
 \hline
 n & m & k & \thead{Scheme} & \thead{Bias} & \thead{MSE} & \thead{Symmetric \\ Credible Interval} & \thead{HPD \\ Credible Interval}  \\ [0.5ex]
 \hline\hline
 20	&	14	&	3	&	I	&	0.129	&	0.338	&	2.536 (97)	&	2.216 (94)\\
 	&		&		&	II	&	0.130	&	0.329	&	2.500 (97)	&	2.189 (94)\\
 	&		&		&	III	&	0.118	&	0.310	&	2.362 (96)	&	2.088 (94)\\
 	&		&	5	&	I	&	0.104	&	0.304	&	2.466 (97)	&	2.158 (94)\\
 	&		&		&	II	&	0.134	&	0.326	&	2.511 (97)	&	2.480 (92)\\
 	&		&		&	III	&	0.121	&	0.307	&	2.366 (96)	&	2.092 (94)\\
 	&	18	&	3	&	I	&	0.132	&	0.316	&	2.264 (96)	&	2.020 (94)\\
 	&		&		&	II	&	0.137	&	0.326	&	2.258 (96)	&	2.015 (95)\\
 	&		&		&	III	&	0.123	&	0.280	&	2.140 (96)	&	1.925 (94)\\
  	&     	&   5   &   I   &   0.123   &   0.299   &   2.237 (96)  &   1.997 (94)\\
    &       &       &   II  &   0.135   &   0.334   &   2.261 (96)  &   2.017 (94)\\
    &       &       &   III &   0.119   &   0.307   &   2.143 (95)  &   1.926 (94)\\
 30 &   22  &   6   &   I   &   0.106   &   0.246   &   1.990 (96)  &   1.807 (94)\\
    &       &       &   II  &   0.106   &   0.249   &   1.967 (96)  &   1.789 (94)\\
    &       &       &   III &   0.090   &   0.213   &   1.831 (95)  &   1.680 (94)\\
    &       &   8   &   I   &   0.103   &   0.247   &   1.989 (96)  &   1.805 (95)\\
    &       &       &   II  &   0.106   &   0.249   &   1.967 (96)  &   1.789 (94)\\
    &       &       &   III &   0.090   &   0.213   &   1.831 (95)  &   1.680 (94)\\
    &   26  &   6   &   I   &   0.097   &   0.222   &   1.818 (96)  &   1.671 (95)\\
    &       &       &   II  &   0.099   &   0.223   &   1.805 (96)  &   1.660 (94)\\
    &       &       &   III &   0.087   &   0.202   &   1.701 (95)  &   1.575 (94)\\
    &       &   8   &   I   &   0.097   &   0.222   &   1.818 (96)  &   1.671 (95)\\
    &       &       &   II  &   0.099   &   0.223   &   1.805 (96)  &   1.660 (94)\\
    &       &       &   III &   0.087   &   0.202   &   1.701 (95)  &   1.575 (94)\\
 \hline
 \end{tabular}
 }
\end{table}

\begin{table}[h!]
\captionsetup{font=scriptsize}
\caption {Results of Bayesian analysis under Prior-II  for
$\theta_2$} 
\label{non-informative-theta-2} \centering
\scalebox{0.7}
{
 \begin{tabular}{|c c c c c c c c|}
 \hline
 n & m & k & \thead{Scheme} & \thead{Bias} & \thead{MSE} & \thead{Symmetric \\ Credible Interval} & \thead{HPD \\ Credible Interval}  \\ [0.5ex]
 \hline\hline
 20	&	14	&	3	&	I	&	0.172	&	0.561	&	2.960 (95)	&	2.509 (92)\\
 	&		&		&	II	&	0.170	&	0.543	&	2.902 (95)	&	2.470 (92)\\
 	&		&		&	III	&	0.149	&	0.482	&	2.686 (94)	&	2.318 (92)\\
 	&		&	5	&	I	&	0.138	&	0.498	&	2.852 (95)	&	2.424 (91)\\
 	&		&		&	II	&	0.175	&	0.540	&	2.915 (95)	&	2.480 (92)\\
 	&		&		&	III	&	0.152	&	0.477	&	2.691 (94)	&	2.323 (92)\\
 	&	18	&	3	&	I	&	0.163	&	0.484	&	2.549 (94)	&	2.226 (93)\\
 	&		&		&	II	&	0.169	&	0.509	&	2.546 (94)	&	2.226 (93)\\
 	&		&		&	III	&	0.145	&	0.403	&	2.367 (94)	&	2.093 (93)\\
  	&     	&   5   &   I   &   0.151   &   0.452   &   2.508 (94)  &   2.194 (92)\\
    &       &       &   II  &   0.167   &   0.517   &   2.554 (94)  &   2.230 (92)\\
    &       &       &   III &   0.144   &   0.458   &   2.387 (94)  &   2.105 (92)\\
 30 &   22  &   6   &   I   &   0.122   &   0.340   &   2.167 (95)  &   1.941 (93)\\
    &       &       &   II  &   0.122   &   0.346   &   2.141 (94)  &   1.920 (93)\\
    &       &       &   III &   0.098   &   0.276   &   1.960 (94)  &   1.780 (92)\\
    &       &   8   &   I   &   0.119   &   0.346   &   2.170 (95)  &   1.942 (93)\\
    &       &       &   II  &   0.122   &   0.346   &   2.141 (94)  &   1.920 (93)\\
    &       &       &   III &   0.098   &   0.276   &   1.960 (94)  &   1.780 (92)\\
    &   26  &   6   &   I   &   0.108   &   0.296   &   1.949 (95)  &   1.773 (94)\\
    &       &       &   II  &   0.109   &   0.294   &   1.932 (94)  &   1.760 (93)\\
    &       &       &   III &   0.093   &   0.255   &   1.801 (94)  &   1.655 (93)\\
    &       &   8   &   I   &   0.108   &   0.296   &   1.949 (95)  &   1.773 (94)\\
    &       &       &   II  &   0.109   &   0.294   &   1.932 (94)  &   1.760 (93)\\
    &       &       &   III &   0.093   &   0.255   &   1.801 (94)  &   1.655 (93)\\
 \hline
 \end{tabular}
 }
\end{table}

\subsection{\sc Data analysis}

In this section we consider a real data set mainly for illustrative purposes. The data is taken from an experiment conducted by 
Dr. H.E. Walburg, Jr., of the Oak Ridge National Laboratory (see Hoel \cite{hoel}).  A group of male mice received a radiation 
dose of 300r at age 5-6 weeks.  The causes of deaths were classified as (1) Thymic Lymphoa, (2) Reticulum Cell Sarcoma and 
(3) Other causes.  For our analysis we have consider Reticulum Cell Sarcoma as Cause-1 and combined all other causes as Cause-2.  
We have generated a generalized progressively censored sample based on the following scheme: $n=77, k=20, m=25, T=700, 
R_1=\ldots=R_{24}=2, R_{25}=4$.  The data set is presented below.  Here the first component represents the failure time and the second
component represents the cause of failure.

(40, 2), (42, 2), (62, 2), (163, 2), (179, 2), (206, 2), (222, 2), (228, 2), (252, 2), (259, 2), (318, 1), (385, 2), (407, 2), (420, 2), (462, 2), (507, 2), (517, 2), (524, 2), (525, 1), (528, 1), (536, 1), (605, 1), (612, 1), (620, 2), (621, 1). 

From the data set we obtain $\displaystyle W=\sum_{i=1}^m (R_i+1)z_i=28962.$  $D_1=7, D_2=18$.  Since we do not have any prior 
information about the unknown parameters, we use Prior-II in this case.  The Bayes estimates of $\theta_1$ and $\theta_2$ are 
$\widehat{\theta}_{1B}=4137.429$ and $\widehat{\theta}_{2B}=1609$, respectively.  These are also the MLEs of $\theta_1$ and 
$\theta_2$.   The Bayes estimates of $\lambda_{1B}$ and $\lambda_{2B}$ are $\widehat{\lambda}_1=0.0003$ and $\widehat{\lambda}_2 = 
0.0006$, respectively.  We present different 95\% confidence and credible intervals of $\theta_1$ and $\theta_2$ in Table 
\ref{tab:title}.  In Figure \ref{cset} we present the 95\% credible set of $\theta_1$ and $\theta_2$.
\begin{table}[h!]
\caption {Results for real data} \label{tab:title}
\centering
\scalebox{0.8}{
\begin{tabular}{|c| c| c| c|} 
 \hline     
  & Exact CI &  Bootstrap CI &  HPD CRI\\
 \hline
 $\theta_1$ & 2017.686, 10397.358 &  2061.129, 12220.355 & 1715.194, 7480.241\\
$\theta_2$ & 1018.497, 2790.006 &  976.663, 2749.781 & 939.656, 2363.621\\ 
 \hline
 \end{tabular}}
 \end{table} 
\begin{figure}[h]
\includegraphics[scale=0.5]{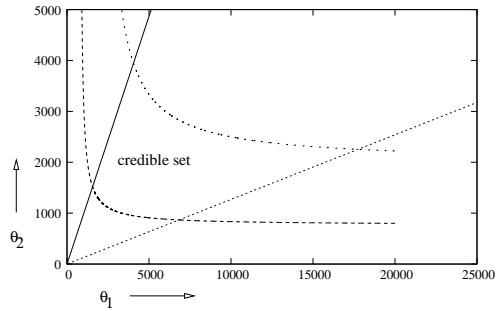}
\centering
\captionsetup{font=scriptsize}
\caption{Credible set of $\theta_1$ and $\theta_2$   \label{cset}}
\end{figure}

\section{\sc Conclusion}

In this paper we consider the analysis of generalized progressive hybrid censored data in presence of competing risks.  It is assumed
that the competing causes of failures satisfy the latent failure time model assumption of Cox \cite{cox}.  We further assume that the
lifetime distributions of the competing causes are exponentially distributed with different scale parameters.  We carry out both 
the frequentist and Bayesian analysis of the unknown parameters, and it is observed that the performances of the Bayes estimates
are better than the MLEs.  It should be mention that recently Gorny and Cramer \cite{GC:2016} developed the exact likelihood inference
for exponential distributions under generalized hybrid censoring schemes using splines.  The treatments are completely different than 
ours.  It will be interesting to develop the exact inference results in presence of competing using splines approach.  More work is 
needed in that direction.  One natural extension of the present work is when the lifetime distributions of the competing causes
follow Weibull distributions.  It is a more challenging problem.  Work is in progress and it will reported later.

\section*{\sc Acknowledgements:} The would like to thank the referees for their constructive suggestions which have helped us
to improve the manuscript significantly.

 \section*{Appendix: The Proof of the Main Theorem}

First we derive the distribution function of $\widehat{\theta}_1$ which is given 
below. 
\begin{eqnarray}
F_{\widehat{\theta}_1| D_1>0}(x)& = &P(\widehat{\theta}_1 \leq x | D_1>0)   \nonumber  \\
& = & P(\widehat{\theta}_1 \leq x, A | D_1>0)+P(\widehat{\theta}_1 \leq x, B | D_1>0) +P(\widehat{\theta}_1 \leq x, C | D_1>0) \nonumber \\
& = &\sum_{j=k}^{m-1} \sum_{i=1}^j P(\widehat{\theta}_1 \leq x| J=j, D_1=i)P(J=j, D_1=i | D_1>0)  \nonumber \\
& &+\sum_{i=1}^k P(\widehat{\theta}_1 \leq x| B, D_1=i) P(B, D_1=i | D_1>0)  \nonumber  \\
& &+ \sum_{i=1}^m P(\widehat{\theta}_1 \leq x| C, D_1=i) P(C, D_1=i | D_1>0),    \label{dist}
\end{eqnarray}
where, 
\[
A=\{Z_{k:m:n}<T<Z_{m:m:n}\},\ B=\{T<Z_{k:m:n}<Z_{m:m:n}\}, \ C=\{Z_{k:m:n}<Z_{m:m:n}<T\}.
\]
Now to compute the terms on the right hand side of (\ref{dist}), we need the following Lemmas. 
\enlargethispage{1 cm}
\begin{comment}
\noindent{\sc Lemma 0:} ~Let $f(x)$ and $F(x)$ denote the PDF and CDF of absolutely continuous random variable X, and let $a_j>0$ for $j=1,2,\ldots, r$. Then for any $r \geq 1$, we have
\begin{align*}
%\begin{eqnarray*}
%\begin{equation*}
&\int_{T}^{x_{r+1}} \ldots \int_T^{x_3} \int_T^{x_2} \prod_{j=1}^r f(x_j)\{1-F(x_j)\}^{a_j-1} dx_1 dx_2\ldots dx_r\\
{}&=\sum_{i=0}^r c_{i,r}(\textbf{a}_r)\{1-F(x_{r+1})\}^{b_{i,r}(\textbf{a}_r)}\{1-F(T)\}^{\sum_{j=1}^{r-i} a_j},
%\end{equation*}
%\end{eqnarray*}
\end{align*}
where, $\textbf{a}_r=(a_1,a_2,\ldots,a_r)$, \ $b_{i,r}(\textbf{a}_r)=\sum_{j=r-i+1}^r a_j$, \ $c_{i,r}(\textbf{a}_r)=\frac{(-1)^i}{\{\prod_{j=1}^i\sum_{k=r-i+1}^{r-i+j}a_k\}\{\prod_{j=1}^{r-i}\sum_{k=j}^{r-i}a_k\}}$\\
in which we adopt the conventions $\prod_{j=1}^0 d_j \equiv 1$ and $\sum_{j=i}^{i-1}d_j \equiv 0$.

\noindent {\sc Proof:} See Balakrishnan et al.\cite{chandrasekar}.     \qed
\end{comment}
\noindent {\sc Lemma: 1} ~The joint distribution of $Z_{1:m:n},\ldots, Z_{J:m:n}$ given $J=j, D_1=i$ for $i=1,\ldots,j$ and $j=k,\dots, m-1$ at $z_1,\ldots, z_j$, is given by,
\begin{align}
&f_{Z_{1:m:n},\ldots, Z_{j:m:n}|J=j, D_1=i}(z_1, \ldots, z_j)\nonumber\\
&=\frac{\prod_{v=1}^j \gamma_v}{P(J=j, D_1=i)} \Big(\frac{1}{\theta_1}\Big)^i \Big(\frac{1}{\theta_2}\Big)^{j-i} e^{-\frac{1}{\theta} (\sum_{s=1}^j z_s (1+R_s) + TR^*_j)}.\label{lemma1}
\end{align}

\noindent {\sc Proof of {\sc Lemma 1:}}
%\begin{proof}
For $j=k,k+1,\ldots m \ \ \hbox{and} \ \ i=1,2,\ldots, j $, consider left side of (\ref{lemma1}),
\begin{align}
&P(z_1<Z_{1:m:n}<z_1+dz_1,\ldots,z_j<Z_{j:m:n}<z_j+dz_j|J=j, D_1=i) \nonumber \\
&=\frac{P(z_1<Z_{1:m:n}<z_1+dz_1,\ldots,z_j<Z_{j:m:n}<z_j+dz_j,J=j, D_1=i)}{P(J=j, D_1=i)} \label{lemma1_1}    \\
\nonumber
\end{align}
\noindent Note that, the event $\{z_1<Z_{1:m:n}<z_1+dz_1,\ldots,z_j<Z_{j:m:n}<z_j+dz_j,J=j, D_1=i)\}$ for $i=1,\ldots,j; j=k,k+1,\ldots,m$ is nothing but the failure times of $j$ units till time point $T$ and out of them $i$ units have failed due to Cause-1. The probability of this event is the likelihood contribution of the data when $T^*=T$. Thus (\ref{lemma1_1}) becomes, 
\begin{align*}
\prod_{v=1}^j \gamma_v \bigg(\frac{1}{\theta_1}\bigg)^i \bigg(\frac{1}{\theta_2}\bigg)^{j-i}\frac{e^{-\frac{1}{\theta}\big[\sum_{s=1}^{j} (1+R_s)z_s + TR^*_j\big]}}{P(J=j,D_1=i)} \; dz_1\ldots dz_j.
\end{align*}
\qed

\noindent {\sc Lemma 2:} ~The joint distribution of $Z_{1:m:n},\ldots, Z_{k:m:n}$ given $T<Z_{k:m:n}<Z_{m:m:n}, D_1=i$ for $i=1,\ldots,k$ at $z_1,\ldots, z_k$, is given by
\begin{align}
&f_{Z_{1:m:n},\ldots, Z_{k:m:n}|T<Z_{k:m:n}<Z_{m:m:n}, D_1=i}(z_1, \ldots, z_k) \nonumber \\
&=\frac{\prod_{v=1}^k \gamma_v}{P(T<Z_{k:m:n}<Z_{m:m:n}, D_1=i)} \Big(\frac{1}{\theta_1}\Big)^i \Big(\frac{1}{\theta_2}\Big)^{k-i} e^{-\frac{1}{\theta} (\sum_{s=1}^{k-1} z_s (1+R_s) + z_k (1+R^*_k))}. \label{lemma2}
\end{align}

\noindent {\sc Proof of Lemma 2:}
For $i=1,2, \ldots, k$, \noindent consider left side of (\ref{lemma2}),
\begin{align}
&P(z_1<Z_{1:m:n}<z_1+dz_1,\ldots,z_k<Z_{k:m:n}<z_k+dz_k|T<Z_{k:m:n}<Z_{m:m:n},D_1=i)\nonumber \\
&=\frac{P(z_1<Z_{1:m:n}<z_1+dz_1,\ldots,z_k<Z_{k:m:n}<z_k+dz_k,T<Z_{k:m:n}<Z_{m:m:n},D_1=i)}{P(T<Z_{k:m:n}<Z_{m:m:n},D_1=i)} \label{lemma2_2} \\
\nonumber
\end{align}
\noindent Note that, the event $\{z_1<Z_{1:m:n}<z_1+dz_1,\ldots,z_k<Z_{k:m:n}<z_k+dz_k,T<Z_{k:m:n}, D_1=i)\}$ for $i=1,\ldots,k$ is nothing but the failure times of $k$ units till the experiment termination point $Z_{k:m:n}$ and out of them $i$ units have failed due to Cause-1. The probability of this event is the likelihood contribution of the data when $T^*=Z_{k:m:n}$. Thus (\ref{lemma2_2}) becomes, 
\begin{align*}
\prod_{v=1}^k \gamma_v\bigg(\frac{1}{\theta_1}\bigg)^i \bigg(\frac{1}{\theta_2}\bigg)^{k-i}\frac{e^{-\frac{1}{\theta}\big[\sum_{s=1}^{k-1} (1+R_s)z_s + z_k(1+R^*_k)\big]}}{P(T<Z_{k:m:n}<Z_{m:m:n},D_1=i)} \; dz_1\ldots dz_k.
\end{align*}
\qed

\pagebreak
\noindent {\sc Lemma 3:} ~The joint distribution of $Z_{1:m:n},\ldots, Z_{m:m:n}$ given $Z_{k:m:n}<Z_{m:m:n}<T, D_1=i$ for $i=1,\ldots,m$ at $z_1,\ldots, z_m$, is given by,
\begin{align}
&f_{Z_{1:m:n},\ldots, Z_{m:m:n}|Z_{k:m:n}<Z_{m:m:n}<T, D_1=i}(z_1, \ldots, z_m) \nonumber \\
&=\frac{\prod_{v=1}^m \gamma_v}{P(Z_{k:m:n}<Z_{m:m:n}<T, D_1=i)} \Big(\frac{1}{\theta_1}\Big)^i \Big(\frac{1}{\theta_2}\Big)^{m-i} e^{-\frac{1}{\theta}\sum_{s=1}^{m} (1+R_s)z_s}.\label{lemma3}
\end{align}

\noindent {\sc Proof of Lemma 3:}
For $i=1,2,\ldots m$, consider left side of (\ref{lemma3}),
\begin{align}
&P(z_1<Z_{1:m:n}<z_1+dz_1,\ldots,z_m<Z_{m:m:n}<z_m+dz_m|Z_{k:m:n}<Z_{m:m:n}<T,D_1=i) \nonumber \\
&=\frac{P(z_1<Z_{1:m:n}<z_1+dz_1,\ldots,z_m<Z_{m:m:n}<z_m+dz_m,Z_{k:m:n}<Z_{m:m:n}<T,D_1=i)}{P(Z_{k:m:n}<Z_{m:m:n}<T,D_1=i)} \label{lemma3_3}\\
\nonumber
\end{align}
\noindent Note that, the event $\{z_1<Z_{1:m:n}<z_1+dz_1,\ldots,z_m<Z_{m:m:n}<z_m+dz_m,Z_{k:m:n}<Z_{m:m:n}<T, D_1=i)\}$ for $i=1,\ldots,m$ is nothing but the failure times of $m$ units till the experiment termination point $Z_{m:m:n}$ and out of them $i$ units have failed due to Cause-1. The probability of this event is the likelihood contribution of the data when $T^*=Z_{m:m:n}$. Thus \ref{lemma3_3} becomes,
\begin{align}
\prod_{v=1}^m \gamma_v\bigg(\frac{1}{\theta_1}\bigg)^i \bigg(\frac{1}{\theta_2}\bigg)^{m-i}\frac{e^{-\frac{1}{\theta}\sum_{s=1}^{m} (1+R_s)z_s}}{P(Z_{k:m:n}<Z_{m:m:n}<T,D_1=i)} \; dz_1\ldots dz_m.
\end{align}
\qed

\begin{theorem}
The conditional moment generating function of $\widehat{\theta}_1$ given $J=j, D_1=i$ for $i=1,\ldots, j$ and 
$j=k,\ldots, m-1$ is given by
\begin{align*}
&\quad E(e^{t\widehat{\theta}_1}|J=j, D_1=i)\\
&= \prod_{v=1}^j \frac{\gamma_v}{P(J=j,D_1=i)}  \bigg(\frac{1}{\theta_1}\bigg)^i\bigg(\frac{1}{\theta_2}\bigg)^{j-i} \Big(\frac{1}{\theta}-\frac{t}{i}\Big)^{-j} \times\\
& \quad \sum_{v=0}^j \frac{(-1)^v e^{-T(\frac{1}{\theta}-\frac{t}{i})\gamma_{j-v+1}}}{\{\prod_{h=1}^v (\gamma_{j+1-v}-\gamma_{j+1-v+h})\}\{\prod_{h=1}^{j-v}  (\gamma_h-\gamma_{j-v+1})\}}.
\end{align*}
\end{theorem}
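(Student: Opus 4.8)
\noindent The plan is to compute this conditional MGF directly from the conditional joint density in Lemma~1. On the event $\{J=j,D_1=i\}$ one has $T^*=T$, so by the formulas for $\widehat\theta_1$ and $W$ we get $\widehat\theta_1=W/D_1=\frac1i\big(\sum_{s=1}^j z_s(1+R_s)+TR^*_j\big)$ and hence $e^{t\widehat\theta_1}=e^{\frac{t}{i}[\sum_{s=1}^j z_s(1+R_s)+TR^*_j]}$. Integrating this against the density in (\ref{lemma1}) over $\{0<z_1<\cdots<z_j<T\}$ and writing $\beta:=\frac1\theta-\frac{t}{i}$, the product $e^{\frac{t}{i}W}e^{-W/\theta}$ collapses to $e^{-\beta W}$, so that
\[
E\big(e^{t\widehat\theta_1}\mid J=j,D_1=i\big)=\frac{\prod_{v=1}^j\gamma_v}{P(J=j,D_1=i)}\Big(\tfrac1{\theta_1}\Big)^i\Big(\tfrac1{\theta_2}\Big)^{j-i}\int_{0<z_1<\cdots<z_j<T}e^{-\beta[\sum_{s=1}^j z_s(1+R_s)+TR^*_j]}\,dz_1\cdots dz_j .
\]
Thus everything reduces to evaluating this $j$-fold integral.

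\noindent Next I would pass to spacings. Using $1+R_s=\gamma_s-\gamma_{s+1}$ and $R^*_j=\gamma_{j+1}$, set $w_1=z_1$ and $w_s=z_s-z_{s-1}$ for $2\le s\le j$; this triangular substitution has unit Jacobian, and a telescoping (Abel) rearrangement gives $\sum_{s=1}^j z_s(1+R_s)+TR^*_j=T\gamma_{j+1}+\sum_{s=1}^j w_s(\gamma_s-\gamma_{j+1})$, while $\{0<z_1<\cdots<z_j<T\}$ maps onto the open simplex $\{w_s>0,\ \sum_{s=1}^j w_s<T\}$. Introducing the slack variable $w_{j+1}=T-\sum_{s=1}^j w_s$, whose coefficient $\gamma_{j+1}-\gamma_{j+1}$ vanishes, the integral becomes $e^{-\beta T\gamma_{j+1}}\int_{\{w\in\mathbb R_+^{j+1}:\,\sum_s w_s=T\}}e^{-\sum_{s=1}^{j+1}c_s w_s}\,dw_1\cdots dw_j$ with $c_s=\beta(\gamma_s-\gamma_{j+1})$.

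\noindent The key computational input is the elementary identity: for distinct $c_1,\dots,c_n$,
\[
\int_{\{w\in\mathbb R_+^{n}:\,\sum_s w_s=T\}}e^{-\sum_s c_s w_s}\,dw_1\cdots dw_{n-1}=\sum_{r=1}^n\frac{e^{-c_r T}}{\prod_{s\ne r}(c_s-c_r)},
\]
which I would establish in one line by noting that the left-hand side has Laplace transform $\prod_{s=1}^n(p+c_s)^{-1}$ in $T$ and then inverting the partial-fraction expansion. The hypothesis is met here because $\gamma_1>\gamma_2>\cdots>\gamma_{j+1}$ (since $\gamma_s-\gamma_{s+1}=1+R_s\ge1$) and $\beta\ne0$ for $t$ in a neighbourhood of $0$. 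Applying the identity with $c_s=\beta(\gamma_s-\gamma_{j+1})$ one has $c_s-c_r=\beta(\gamma_s-\gamma_r)$, which pulls out a factor $\beta^{-j}=(\tfrac1\theta-\tfrac{t}{i})^{-j}$, and $e^{-\beta T\gamma_{j+1}}e^{-c_rT}=e^{-\beta T\gamma_r}$.

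\noindent Finally I would re-index by $r=j-v+1$, so that $v$ runs from $0$ to $j$, and split $\prod_{s\ne r}(\gamma_s-\gamma_r)$ into the factors with $s<r$, which contribute $\prod_{h=1}^{j-v}(\gamma_h-\gamma_{j-v+1})$, and the factors with $s>r$, which contribute $\prod_{h=1}^v(\gamma_{j+1-v+h}-\gamma_{j+1-v})=(-1)^v\prod_{h=1}^v(\gamma_{j+1-v}-\gamma_{j+1-v+h})$; moving the resulting $(-1)^v$ to the numerator reproduces exactly the claimed expression. The Abel rearrangement and this index/sign bookkeeping are routine; the only substantive step is the simplex Laplace identity, and the place most likely to require care is matching the two denominator products together with the factor $(-1)^v$ in the last step.
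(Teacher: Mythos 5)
Your proposal is correct and follows essentially the same route as the paper: both reduce the conditional MGF, via the conditional joint density of Lemma~1, to the $j$-fold integral $\int_{0<z_1<\cdots<z_j<T}e^{-(\frac{1}{\theta}-\frac{t}{i})[\sum_s z_s(1+R_s)+TR^*_j]}\,dz$ and then evaluate that integral. The only difference is that the paper evaluates it by citing Lemma~1 of Balakrishnan, Childs and Chandrasekar (2002), whereas you re-derive that identity from scratch via the spacings substitution and a Laplace-transform/partial-fraction argument — a self-contained but equivalent computation.
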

\begin{proof}
\begin{align*}
&\quad E[e^{t\widehat{\theta}_1}|J=j,D_1=i]\\
&\quad =\frac{\prod_{v=1}^j \gamma_v}{P(J=j,D_1=i)} \bigg(\frac{1}{\theta_1}\bigg)^i\bigg(\frac{1}{\theta_2}\bigg)^{j-i}\times\\
&\quad \int_{0}^T \int_{0}^{z_j}\ldots \int_{0}^{z_2} e^{-z_1(1+R_1)(\frac{1}{\theta}-\frac{t}{i})}   \ldots e^{-z_j(1+R_j)(\frac{1}{\theta}-\frac{t}{i})} e^{-TR^*_j(\frac{1}{\theta} -\frac{t}{i})} \ dz_1\ldots dz_j \\
&~~\text{The above equality follows using Lemma 1,}\\
&\quad =\frac{\prod_{v=1}^j \gamma_v}{P(J=j,D_1=i)} \bigg(\frac{1}{\theta_1}\bigg)^i\bigg(\frac{1}{\theta_2}\bigg)^{j-i} e^{-TR^*_j(\frac{1}{\theta}-\frac{t}{i})}\times\\
& \quad \int_{0}^T \int_{0}^{z_j}\ldots \int_{0}^{z_2} e^{-z_1} e^{-z_1(1+R_1)(\frac{1}{\theta}-\frac{t}{i})-1}   \ldots e^{-z_j} e^{-z_{j}(1+R_{j})(\frac{1}{\theta}-\frac{t}{i})-1}dz_1\ldots dz_j \\
& =\frac{\prod_{v=1}^j \gamma_v}{P(J=j,D_1=i)} \bigg(\frac{1}{\theta_1}\bigg)^i\bigg(\frac{1}{\theta_2}\bigg)^{j-i} \Big(\frac{1}{\theta}-\frac{t}{i}\Big)^{-j} \times \\
&\quad \sum_{v=0}^j \frac{(-1)^v e^{-T(\frac{1}{\theta}-\frac{t}{i})\gamma_{j-v+1}}}{\{\prod_{h=1}^v (\gamma_{j+1-v}-\gamma_{j+1-v+h})\}\{\prod_{h=1}^{j-v}  (\gamma_h-\gamma_{j-v+1})\}} \\
&~~\text{The last equality follows using Lemma 1 of Balakrishnan et al.\cite{chandrasekar}}.
\end{align*}
\end{proof}

\begin{corollary}
The conditional distribution of $\widehat{\theta}_1$ given $J=j, D_1=i$ for $i=1,\ldots,j$ and $j=k,\ldots,m-1$ is given by,
\begin{align*}
f_{\widehat{\theta}_1|J=j,D_1=i}(x)&=\prod_{v=1}^j \frac{\gamma_v}{P(J=j,D_1=i)}  \bigg(\frac{1}{\theta_1}\bigg)^i\bigg(\frac{1}{\theta_2}\bigg)^{j-i} \Big(\frac{1}{\theta}\Big)^{-j}\times \\
& \quad \sum_{v=0}^j \frac{(-1)^v e^{-\frac{T}{\theta}\gamma_{j-v+1}}}{\{\prod_{h=1}^v (\gamma_{j+1-v}-\gamma_{j+1-v+h})\}\{\prod_{h=1}^{j-v}  (\gamma_h-\gamma_{j-v+1})\}}\times\\
& \quad f_G\Big(x;\frac{T}{i}\gamma_{j-v+1},j,\frac{i}{\theta}\Big).
\end{align*}
\end{corollary}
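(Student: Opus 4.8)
The plan is to recover the conditional density of $\widehat{\theta}_1$ directly from the conditional moment generating function computed in the theorem above, exploiting the fact that a moment generating function which is finite on a neighbourhood of the origin determines the distribution uniquely, together with the linearity of this correspondence.

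The first ingredient I would record is the moment generating function of the shifted Gamma law $f_G(\cdot\,;a,b,c)$: with the substitution $u=x-a$,
\begin{align*}
\int_a^{\infty} e^{tx}\,\frac{c^{b}}{\Gamma(b)}(x-a)^{b-1}e^{-c(x-a)}\,dx = e^{ta}\int_0^{\infty}\frac{c^{b}}{\Gamma(b)}u^{b-1}e^{-(c-t)u}\,du = e^{ta}\Big(1-\tfrac{t}{c}\Big)^{-b},
\end{align*}
valid for $t<c$. Next I would rewrite the $t$-dependent factor of the $v$-th summand appearing in the theorem. Writing $\frac{1}{\theta}-\frac{t}{i}=\frac{1}{\theta}\big(1-\frac{\theta t}{i}\big)$ one gets
\begin{align*}
\Big(\tfrac{1}{\theta}-\tfrac{t}{i}\Big)^{-j} e^{-T(\frac{1}{\theta}-\frac{t}{i})\gamma_{j-v+1}} = \Big(\tfrac{1}{\theta}\Big)^{-j}e^{-\frac{T}{\theta}\gamma_{j-v+1}}\cdot e^{\,t\frac{T}{i}\gamma_{j-v+1}}\Big(1-\tfrac{t}{i/\theta}\Big)^{-j},
\end{align*}
which, by the previous display with $a=\frac{T}{i}\gamma_{j-v+1}$, $b=j$, $c=\frac{i}{\theta}>0$, is exactly $(1/\theta)^{-j}e^{-\frac{T}{\theta}\gamma_{j-v+1}}$ times the moment generating function of $f_G\big(x;\frac{T}{i}\gamma_{j-v+1},j,\frac{i}{\theta}\big)$.

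Substituting this into the expression of the theorem shows that $E(e^{t\widehat{\theta}_1}\mid J=j,D_1=i)$ is a finite linear combination of moment generating functions of shifted Gamma laws, the coefficient of $f_G\big(x;\frac{T}{i}\gamma_{j-v+1},j,\frac{i}{\theta}\big)$ being the common prefactor $\tfrac{\prod_{v=1}^j\gamma_v}{P(J=j,D_1=i)}\big(\tfrac{1}{\theta_1}\big)^i\big(\tfrac{1}{\theta_2}\big)^{j-i}\big(\tfrac{1}{\theta}\big)^{-j}$ multiplied by the signed weight $(-1)^v e^{-\frac{T}{\theta}\gamma_{j-v+1}}\big/\big(\{\prod_{h=1}^v(\gamma_{j+1-v}-\gamma_{j+1-v+h})\}\{\prod_{h=1}^{j-v}(\gamma_h-\gamma_{j-v+1})\}\big)$. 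By uniqueness and linearity of the transform, the conditional density of $\widehat{\theta}_1$ equals the same linear combination of the corresponding shifted Gamma densities, which is precisely the stated formula. The only point warranting a word of care is that these weights need not be non-negative, so the right-hand side is a generalised (signed) mixture rather than a genuine mixture; this is harmless for the inversion, and non-negativity together with normalisation follow a posteriori from the theorem. Beyond this bookkeeping I do not anticipate a genuine obstacle: the mildly delicate step is the algebraic identification in the second display of each summand with the moment generating function of a shifted Gamma law.
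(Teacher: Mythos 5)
Your proposal is correct and is essentially the argument the paper intends: the corollary is stated as an immediate consequence of the preceding theorem, obtained by recognizing each summand of the conditional moment generating function as $(1/\theta)^{-j}e^{-\frac{T}{\theta}\gamma_{j-v+1}}$ times the moment generating function of the shifted Gamma law $f_G\big(x;\frac{T}{i}\gamma_{j-v+1},j,\frac{i}{\theta}\big)$ and inverting term by term. Your explicit computation of the shifted-Gamma transform and your remark that the resulting representation is a signed mixture (with uniqueness of the transform still applying) fill in exactly the bookkeeping the paper leaves implicit.
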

\enlargethispage{1 cm}
\begin{theorem}
The conditional moment generating function of $\widehat{\theta}_1$ given $T<Z_{k:m:n}<Z_{m:m:n},\\ D_1=i$ for $i=1,\ldots, k$ is 
given by,
\begin{align*}
&\quad E(e^{t\widehat{\theta}_1}|T<Z_{k:m:n}<Z_{m:m:n}, D_1=i)\\
&\quad=\frac{\prod_{v=1}^k \gamma_v}{P(T<Z_{k:m:n}<Z_{m:m:n},D_1=i)} \bigg(\frac{1}{\theta_1}\bigg)^i\bigg(\frac{1}{\theta_2}\bigg)^{k-i} \Big(\frac{1}{\theta}-\frac{t}{i}\Big)^{-k} \times\\
&\quad \sum_{v=0}^{k-1} \frac{(-1)^v e^{-T(\frac{1}{\theta}-\frac{t}{i})\gamma_{k-v}}}{\{\prod_{j=1}^v  (\gamma_{k-v}-\gamma_{k-v+j})\}\{\prod_{j=1}^{k-1-v}  (\gamma_j-\gamma_{k-v})\}\gamma_{k-v}}.
\end{align*}
\end{theorem}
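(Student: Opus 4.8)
The plan is to follow exactly the route used for the $J=j$ case above, but starting from Lemma~2 in place of Lemma~1. On the event $B=\{T<Z_{k:m:n}<Z_{m:m:n}\}$ we have $T^{*}=Z_{k:m:n}$, so conditioning further on $D_1=i$ forces $\widehat{\theta}_1=W/i$ with $W=\sum_{s=1}^{k-1}z_s(1+R_s)+z_k(1+R^{*}_k)$. The only restriction that $B$ places on the observed failure times is $0<z_1<\cdots<z_{k-1}<z_k$ together with $z_k>T$: the progressive removals $R_1,\dots,R_{k-1}$ still take effect even at those of the first $k-1$ failures that happen after $T$, so $W$ and the joint density in Lemma~2 do not care how many failures precede $T$. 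Writing $\beta=\frac{1}{\theta}-\frac{t}{i}$ and inserting the density from Lemma~2,
\begin{align*}
&E\big(e^{t\widehat{\theta}_1}\mid B,D_1=i\big)\\
&\quad=\frac{\prod_{v=1}^{k}\gamma_v}{P(B,D_1=i)}\Big(\tfrac{1}{\theta_1}\Big)^{i}\Big(\tfrac{1}{\theta_2}\Big)^{k-i}\int_T^{\infty}\!\!\int_0^{z_k}\!\!\cdots\!\int_0^{z_2} e^{-\beta W}\,dz_1\cdots dz_{k-1}\,dz_k .
\end{align*}

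The two algebraic identities that drive the evaluation are $1+R_s=\gamma_s-\gamma_{s+1}$ for $1\le s\le k-1$ and $1+R^{*}_k=\gamma_k$, which give $W=\sum_{s=1}^{k-1}z_s(\gamma_s-\gamma_{s+1})+z_k\gamma_k$ and, by telescoping, $\sum_{s=l}^{k-1}(\gamma_s-\gamma_{s+1})+\gamma_k=\gamma_l$ for every $l\le k$. First I would perform the inner $(k-1)$-fold integral over $\{0<z_1<\cdots<z_{k-1}<z_k\}$ for fixed $z_k$, writing each factor $e^{-\beta z_s(\gamma_s-\gamma_{s+1})}$ in the form $f(z_s)\{1-F(z_s)\}^{a_s-1}$ for the standard exponential $f,F$ with $a_s=\beta(\gamma_s-\gamma_{s+1})$, and applying the integration lemma of Balakrishnan et al.\ \cite{chandrasekar} (the one already invoked at the end of the $J=j$ proof), with its lower endpoint taken to be $0$ so the $\{1-F(0)\}$-factor is $1$. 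Because every partial sum of the $a_s$ telescopes to a single $\gamma$-difference, the inner integral collapses to
\begin{align*}
\sum_{v=0}^{k-1}\frac{(-1)^{v}\,e^{-\beta z_k(\gamma_{k-v}-\gamma_k)}}{\beta^{\,k-1}\{\prod_{j=1}^{v}(\gamma_{k-v}-\gamma_{k-v+j})\}\{\prod_{j=1}^{k-1-v}(\gamma_j-\gamma_{k-v})\}} .
\end{align*}

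It then remains to combine this with the leftover factor $e^{-\beta z_k\gamma_k}$ and integrate $z_k$ over $(T,\infty)$: term by term,
\begin{align*}
\int_T^{\infty} e^{-\beta z_k\gamma_k}\,e^{-\beta z_k(\gamma_{k-v}-\gamma_k)}\,dz_k=\int_T^{\infty}e^{-\beta z_k\gamma_{k-v}}\,dz_k=\frac{e^{-\beta T\gamma_{k-v}}}{\beta\gamma_{k-v}} .
\end{align*}
This supplies the missing factor $\beta^{-1}$, raising the total power to $\beta^{-k}=(\tfrac{1}{\theta}-\tfrac{t}{i})^{-k}$, and it is also what produces the lone $\gamma_{k-v}$ in the denominator. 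Restoring the prefactor $\frac{\prod_{v=1}^{k}\gamma_v}{P(B,D_1=i)}(1/\theta_1)^{i}(1/\theta_2)^{k-i}$ and summing over $v$ yields precisely the stated formula; throughout one works with $t<i/\theta$ so that $\beta>0$ and all the integrals converge. I expect the only real difficulty to be bookkeeping — keeping the shifted index conventions of the Balakrishnan et al.\ lemma aligned with the $\gamma_v$'s, verifying that each partial sum of the rates telescopes to the correct $\gamma$-difference, and tracking separately the extra $\gamma_{k-v}$ coming from the final $z_k$-integration rather than from the lemma's coefficients.
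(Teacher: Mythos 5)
Your proposal is correct and follows essentially the same route as the paper: apply Lemma~2, evaluate the inner $(k-1)$-fold integral via Lemma~1 of Balakrishnan et al., and then integrate $z_k$ over $(T,\infty)$ to pick up the extra factor of $\big(\tfrac{1}{\theta}-\tfrac{t}{i}\big)^{-1}$ and the lone $\gamma_{k-v}$ in the denominator. Your bookkeeping of the exponent (yielding $e^{-\beta z_k\gamma_{k-v}}$ before the final integration) is in fact cleaner than the paper's intermediate display, which contains a typographical slip ($\gamma_v$ where $\gamma_{k-v}$ is meant).
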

\begin{proof}
\begin{align*}
&\quad E[e^{t\widehat{\theta}_1}|T<Z_{k:m:n}<Z_{m:m:n},D_1=i]\\
&=\frac{\prod_{v=1}^k \gamma_v}{P(T<Z_{k:m:n}<Z_{m:m:n},D_1=i)} \bigg(\frac{1}{\theta_1}\bigg)^i\bigg(\frac{1}{\theta_2}\bigg)^{k-i}\times\\
&\quad \int_{T}^\infty \int_{0}^{z_k}\ldots \int_{0}^{z_2} e^{-z_1(1+R_1)(\frac{1}{\theta}-\frac{t}{i})}   \ldots e^{-z_{k-1}(1+R_{k-1})(\frac{1}{\theta}-\frac{t}{i})}e^{-z_k(1+R^*_k)(\frac{1}{\theta}-\frac{t}{i})} dz_1\ldots dz_k  \\
&~~\text{The above equality follows using Lemma 2,}\\
&=\frac{\prod_{v=1}^k \gamma_v}{P(T<Z_{k:m:n}<Z_{m:m:n},D_1=i)} \bigg(\frac{1}{\theta_1}\bigg)^i\bigg(\frac{1}{\theta_2}\bigg)^{k-i}\times\\
&\quad \int_{T}^\infty \Big[\int_{0}^{z_k}\ldots \int_{0}^{z_2} e^{-z_1} e^{-z_1(1+R_1)(\frac{1}{\theta}-\frac{t}{i})-1} \ldots e^{-z_{k-1}} e^{-z_{k-1}(1+R_{k-1})(\frac{1}{\theta}-\frac{t}{i})-1}dz_1\ldots dz_{k-1} \Big]\\
{} &\ \ \ \ \ \ \ e^{-z_k(1+R^*_k)(\frac{1}{\theta}-\frac{t}{i})} dz_k\\
&=\frac{\prod_{v=1}^k \gamma_v}{P(T<Z_{k:m:n}<Z_{m:m:n},D_1=i)} \bigg(\frac{1}{\theta_1}\bigg)^i\bigg(\frac{1}{\theta_2}\bigg)^{k-i} \frac{1}{(\frac{1}{\theta}-\frac{t}{i})^{k-1}}\times\\
&\quad \int_T^\infty \sum_{v=0}^{k-1} \frac{(-1)^v e^{-z_k(\frac{1}{\theta}-\frac{t}{i})\gamma_v}}{\{\prod_{j=1}^v  (\gamma_{k-v}-\gamma_{k-v+j})\}\{\prod_{j=1}^{k-1-v}  (\gamma_j-\gamma_{k-v})\}} dz_k\\
&~~\text{The last equality follows using Lemma 1 of Balakrishnan et al.\cite{chandrasekar}}\\
&=\frac{\prod_{v=1}^k \gamma_v}{P(T<Z_{k:m:n}<Z_{m:m:n},D_1=i)} \bigg(\frac{1}{\theta_1}\bigg)^i\bigg(\frac{1}{\theta_2}\bigg)^{k-i} \Big(\frac{1}{\theta}-\frac{t}{i}\Big)^{-k} \times \\
&\quad \sum_{v=0}^{k-1} \frac{(-1)^v e^{-T(\frac{1}{\theta}-\frac{t}{i})\gamma_{k-v}}}{\{\prod_{j=1}^v  (\gamma_{k-v}-\gamma_{k-v+j})\}\{\prod_{j=1}^{k-1-v}  (\gamma_j-\gamma_{k-v})\}\gamma_{k-v}}.
\end{align*}
\end{proof}
\begin{corollary}
The conditional distribution of $\widehat{\theta}_1$ given $T<Z_{k:m:n}<Z_{m:m:n}, D_1=i$ for $i=1,\ldots, k$ is given by,
\begin{align*}
f_{\widehat{\theta}_1|T<Z_{k:m:n}<Z_{m:m:n},D_1=i}(x)
&\quad=\frac{\prod_{v=1}^k \gamma_v}{P(T<Z_{k:m:n}<Z_{m:m:n},D_1=i)} \bigg(\frac{1}{\theta_1}\bigg)^i\bigg(\frac{1}{\theta_2}\bigg)^{k-i} \Big(\frac{1}{\theta}\Big)^{-k}\times \\
&\quad \sum_{v=0}^{k-1} \frac{(-1)^v e^{-\frac{T}{\theta}\gamma_{k-v}}}{\{\prod_{j=1}^v  (\gamma_{k-v}-\gamma_{k-v+j})\}\{\prod_{j=1}^{k-1-v}  (\gamma_j-\gamma_{k-v})\}\gamma_{k-v}}\times\\
&\quad f_G\Big(x;\frac{T}{i}\gamma_{k-v},k,\frac{i}{\theta}\Big).
\end{align*}
\end{corollary}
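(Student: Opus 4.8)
The plan is to read off this corollary from the theorem immediately preceding it, which already supplies the conditional moment generating function $M(t):=E\!\left(e^{t\widehat{\theta}_1}\mid T<Z_{k:m:n}<Z_{m:m:n},\,D_1=i\right)$. The idea is that $M(t)$ is a finite, sign-carrying linear combination of moment generating functions of shifted Gamma laws, so by the uniqueness theorem for moment generating functions the conditional density of $\widehat{\theta}_1$ must be the corresponding signed combination of shifted Gamma densities. This is precisely the device already used in the corollary that follows the first moment generating function theorem, and I would run the same argument here.

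First I would record the moment generating function of the shifted Gamma density $f_G$ listed in the notation of Section~2: the substitution $y=x-a$ in $\int_a^\infty e^{tx}\,\tfrac{c^b}{\Gamma(b)}(x-a)^{b-1}e^{-c(x-a)}\,dx$ gives $e^{at}\left(1-\tfrac{t}{c}\right)^{-b}$, valid for $t<c$. Specializing to $a=\tfrac{T}{i}\gamma_{k-v}$, $b=k$, $c=\tfrac{i}{\theta}$ and using the identity $1-\tfrac{t\theta}{i}=\theta\left(\tfrac{1}{\theta}-\tfrac{t}{i}\right)$, the moment generating function of $f_G\!\left(x;\tfrac{T}{i}\gamma_{k-v},k,\tfrac{i}{\theta}\right)$ equals $\left(\tfrac{1}{\theta}\right)^{k}\left(\tfrac{1}{\theta}-\tfrac{t}{i}\right)^{-k}e^{\frac{T}{i}\gamma_{k-v}t}$. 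On the other hand, in the $v$-th summand of the theorem's formula I would split $e^{-T\left(\frac{1}{\theta}-\frac{t}{i}\right)\gamma_{k-v}}=e^{-\frac{T}{\theta}\gamma_{k-v}}\,e^{\frac{T}{i}\gamma_{k-v}t}$ and pull the shared power factor $\left(\tfrac1\theta-\tfrac ti\right)^{-k}$ inside the sum. Comparing the two, the $v$-th summand of $M(t)$ is exactly $\left(\tfrac1\theta\right)^{-k}e^{-\frac T\theta\gamma_{k-v}}$, divided by the product $\{\prod_{j=1}^v(\gamma_{k-v}-\gamma_{k-v+j})\}\{\prod_{j=1}^{k-1-v}(\gamma_j-\gamma_{k-v})\}\gamma_{k-v}$ and carrying the sign $(-1)^v$, times the moment generating function of $f_G\!\left(x;\tfrac Ti\gamma_{k-v},k,\tfrac i\theta\right)$; the overall prefactor $\tfrac{\prod_{v=1}^k\gamma_v}{P(T<Z_{k:m:n}<Z_{m:m:n},D_1=i)}\big(\tfrac1{\theta_1}\big)^i\big(\tfrac1{\theta_2}\big)^{k-i}$ is common to all summands.

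Having written $M(t)=\sum_{v=0}^{k-1}c_v M_v(t)$ with $M_v$ the moment generating function of a shifted Gamma density $g_v$ and $c_v$ the explicit constants above, I would conclude by inverting term by term: the function $\sum_{v=0}^{k-1}c_v g_v$ has moment generating function $\sum_{v=0}^{k-1}c_v M_v=M$, and since all these generating functions are finite on a common open interval about $0$ (it suffices that $t<i/\theta$, which is the condition already used in the theorem's derivation), uniqueness forces the conditional density of $\widehat{\theta}_1$ to equal $\sum_{v=0}^{k-1}c_v g_v$, which is the claimed expression. I do not anticipate a genuine obstacle; the only mild subtlety is that the coefficients $c_v$ need not be nonnegative, so $\sum_v c_v g_v$ is merely a signed combination of densities, but this is harmless because we are only identifying a function through its (finite, everywhere-defined) moment generating function, and that function is automatically a bona fide density, being the conditional density of $\widehat{\theta}_1$. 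The real work is bookkeeping: keeping the shift parameter $\tfrac{T}{i}\gamma_{k-v}$, the shape $k$, and the rate $\tfrac{i}{\theta}$ aligned, and correctly tracking the factor $\left(\tfrac1\theta\right)^{-k}$ that appears when the pure power $\left(\tfrac1\theta-\tfrac ti\right)^{-k}$ is converted into the Gamma-type factor $\left(1-\tfrac{t\theta}{i}\right)^{-k}$.
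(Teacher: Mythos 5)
Your proposal is correct and is essentially the argument the paper intends: the corollary is read off from the preceding moment generating function theorem by writing each summand as $e^{-\frac{T}{\theta}\gamma_{k-v}}\left(\frac{1}{\theta}\right)^{-k}$ times the moment generating function $e^{\frac{T}{i}\gamma_{k-v}t}\left(\frac{1}{\theta}\right)^{k}\left(\frac{1}{\theta}-\frac{t}{i}\right)^{-k}$ of $f_G\left(x;\frac{T}{i}\gamma_{k-v},k,\frac{i}{\theta}\right)$ and inverting term by term via uniqueness of moment generating functions on a neighborhood of the origin. Your bookkeeping of the shift, shape, rate, and the $\left(\frac{1}{\theta}\right)^{-k}$ factor all checks out against the stated formula.
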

\pagebreak
\begin{theorem}
The moment generating function of $\widehat{\theta}_1$ given $Z_{k:m:n}<Z_{m:m:n}<T, D_1=i$ for $i=1,\ldots,m$ is given by,
\begin{align*}
&\quad E(e^{t\widehat{\theta}_1}|Z_{k:m:n}<Z_{m:m:n}<T, D_1=i)\\
&=\frac{\prod_{v=1}^m \gamma_v}{P(Z_{k:m:n}<Z_{m:m:n}<T,D_1=i)} \bigg(\frac{1}{\theta_1}\bigg)^i\bigg(\frac{1}{\theta_2}\bigg)^{m-i} \Big(\frac{1}{\theta}-\frac{t}{i}\Big)^{-m}\times \\
&\quad\sum_{v=0}^{m} \frac{(-1)^v e^{-T(\frac{1}{\theta}-\frac{t}{i})(\gamma_{m-v+1}-\gamma_{m+1})}}{\{\prod_{j=1}^v  (\gamma_{m-v+1}-\gamma_{m-v+j+1})\}\{\prod_{j=1}^{m-v}  (\gamma_j-\gamma_{m-v+1})\}}.
\end{align*}
\end{theorem}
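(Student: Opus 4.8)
The plan is to follow verbatim the pattern of the two preceding theorems, now with the event $C=\{Z_{k:m:n}<Z_{m:m:n}<T\}$, that is, the case $T^{*}=Z_{m:m:n}$. On $\{C,D_1=i\}$ the estimator reduces to $\widehat{\theta}_1=W/D_1=\tfrac{1}{i}\sum_{s=1}^{m}(1+R_s)z_s$, so the conditional MGF is obtained by integrating $e^{tW/i}$ against the joint density of $(Z_{1:m:n},\ldots,Z_{m:m:n})$ given $\{C,D_1=i\}$ supplied by Lemma 3.

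First I would substitute the density from Lemma 3 and combine exponentials, using $e^{tW/i}e^{-W/\theta}=\prod_{s=1}^{m}e^{-(1+R_s)z_s(\frac1\theta-\frac t i)}$. Because $C$ forces all $m$ failures to occur before $T$, the relevant region is the ordered simplex $0<z_1<\cdots<z_m<T$, so the conditional MGF equals $\dfrac{\prod_{v=1}^m\gamma_v}{P(C,D_1=i)}\bigl(\tfrac1{\theta_1}\bigr)^i\bigl(\tfrac1{\theta_2}\bigr)^{m-i}$ times the nested integral $\displaystyle\int_0^T\!\!\int_0^{z_m}\!\!\cdots\!\!\int_0^{z_2}\prod_{s=1}^m e^{-(1+R_s)z_s(\frac1\theta-\frac t i)}\,dz_1\cdots dz_m$.

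Next I would evaluate this nested integral using the same tool invoked in the earlier cases, namely Lemma 1 of Balakrishnan et al.\ \cite{chandrasekar}, applied with $X\sim\hbox{Exp}(1)$ (so $f(z)=1-F(z)=e^{-z}$), all lower limits $0$ (where $1-F(0)=1$), the free upper variable set equal to $T$, and exponent parameters $a_s=(1+R_s)\bigl(\tfrac1\theta-\tfrac t i\bigr)$. The one piece of bookkeeping is the telescoping identity $1+R_v=\gamma_v-\gamma_{v+1}$, which gives $\sum_{l=m-v+1}^{m}(1+R_l)=\gamma_{m-v+1}-\gamma_{m+1}$, $\sum_{l=m-v+1}^{m-v+h}(1+R_l)=\gamma_{m-v+1}-\gamma_{m-v+h+1}$ and $\sum_{l=h}^{m-v}(1+R_l)=\gamma_h-\gamma_{m-v+1}$. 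Plugging these into the coefficients $b_{v,m}$ and $c_{v,m}$ produced by that lemma, factoring out $\bigl(\tfrac1\theta-\tfrac t i\bigr)^{-m}$, and inserting the result into the display of the previous paragraph reproduces the asserted formula.

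The whole computation is routine once Lemma 3 and the integral lemma are in hand; the only places that require attention are the index gymnastics in rewriting $b_{v,m}$ and $c_{v,m}$ in terms of the $\gamma_v$'s (matching the shift between $\gamma_{m-v+1}$ and $\gamma_{m+1}$ and the two product ranges), and the implicit nondegeneracy requirement that the $a_s$'s --- equivalently the differences $\gamma_h-\gamma_{h'}$ --- be distinct, the coincident subcases being handled by the usual continuity argument. I expect this indexing to be the main, though mild, obstacle, exactly as in the proofs of the two preceding theorems.
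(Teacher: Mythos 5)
Your proposal is correct and follows essentially the same route as the paper: substitute the conditional joint density from Lemma 3, absorb $e^{tW/i}$ into the exponential to get $\prod_{s=1}^m e^{-(1+R_s)z_s(\frac{1}{\theta}-\frac{t}{i})}$, integrate over the ordered simplex $0<z_1<\cdots<z_m<T$ via Lemma 1 of Balakrishnan et al., and translate the resulting coefficients into the $\gamma$-notation through the telescoping identity $1+R_v=\gamma_v-\gamma_{v+1}$. The index bookkeeping you flag is exactly the only nontrivial step in the paper's own computation.
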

\begin{proof}
\begin{align*}
& \quad E[e^{t\widehat{\theta}_1}|Z_{k:m:n}<Z_{m:m:n}<T,D_1=i]\\
&  =\frac{\prod_{v=1}^m \gamma_v}{P(Z_{k:m:n}<Z_{m:m:n}<T,D_1=i)} \bigg(\frac{1}{\theta_1}\bigg)^i\bigg(\frac{1}{\theta_2}\bigg)^{m-i}\times\\
& \quad \int_{0}^T \int_{0}^{z_m}\ldots \int_{0}^{z_2} e^{-z_1(1+R_1)(\frac{1}{\theta}-\frac{t}{i})} \ldots e^{-z_{m-1}(1+R_{m-1})(\frac{1}{\theta}-\frac{t}{i})}e^{-z_m(1+R_m)(\frac{1}{\theta}-\frac{t}{i})} dz_1\ldots dz_m  \\
&~~\text{The above equality follows using Lemma 3,}\\
&  =\frac{\prod_{v=1}^m \gamma_v}{P(Z_{k:m:n}<Z_{m:m:n}<T,D_1=i)} \bigg(\frac{1}{\theta_1}\bigg)^i\bigg(\frac{1}{\theta_2}\bigg)^{m-i}\times\\
& \quad \int_{0}^T \ldots \int_{0}^{z_2} e^{-z_1} e^{-z_1(1+R_1)(\frac{1}{\theta}-\frac{t}{i})-1} \ldots e^{-z_m} e^{-z_m(1+R_m)(\frac{1}{\theta}-\frac{t}{i})-1} dz_1\ldots dz_m\\
& =\frac{\prod_{v=1}^m \gamma_v}{P(Z_{k:m:n}<Z_{m:m:n}<T,D_1=i)} \bigg(\frac{1}{\theta_1}\bigg)^i\bigg(\frac{1}{\theta_2}\bigg)^{m-i} \Big(\frac{1}{\theta}-\frac{t}{i}\Big)^{-m}\times\\
& \quad \sum_{v=0}^{m} \frac{(-1)^v e^{-T(\frac{1}{\theta}-\frac{t}{i})(\gamma_{m-v+1}-\gamma_{m+1})}}{\{\prod_{j=1}^v  (\gamma_{m-v+1}-\gamma_{m-v+j+1})\}\{\prod_{j=1}^{m-v}  (\gamma_j-\gamma_{m-v+1})\}} \\
&~~\text{The last equality follows using Lemma 1 of Balakrishnan et al.\cite{chandrasekar}}.
\end{align*}
\end{proof}

\begin{corollary}
The conditional distribution of $\widehat{\theta}_1$ given $Z_{k:m:n}<Z_{m:m:n}<T, D_1=i$ for $i=1,\ldots, m$ is given by,
\begin{align*}
f_{\widehat{\theta}_1|Z_{k:m:n}<Z_{m:m:n}<T,D_1=i}(x)&\quad=\frac{\prod_{v=1}^m \gamma_v}{P(Z_{k:m:n}<Z_{m:m:n}<T,D_1=i)} \bigg(\frac{1}{\theta_1}\bigg)^i\bigg(\frac{1}{\theta_2}\bigg)^{m-i} \Big(\frac{1}{\theta}\Big)^{-m}\times\\
&\quad\sum_{v=0}^{m} \frac{(-1)^v e^{-\frac{T}{\theta}(\gamma_{m-v+1}-\gamma_{m+1})}}{\{\prod_{j=1}^v  (\gamma_{m-v+1}-\gamma_{m-v+j+1})\}\{\prod_{j=1}^{m-v}  (\gamma_j-\gamma_{m-v+1})\}}\times\\
&\quad f_G\Big(x;\frac{T}{i}(\gamma_{m-v+1}-\gamma_{m+1}),m,\frac{i}{\theta}\Big).
\end{align*}
\end{corollary}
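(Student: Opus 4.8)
The plan is to obtain the density by inverting the conditional moment generating function established in the preceding theorem, recognizing that function --- term by term in the index $v$ --- as the transform of a finite signed combination of shifted Gamma densities $f_G$. The only external fact needed is that, for $t<c$,
\begin{equation*}
\int_a^\infty e^{tx}\,\frac{c^{b}}{\Gamma(b)}(x-a)^{b-1}e^{-c(x-a)}\,dx \;=\; e^{at}\Big(1-\frac{t}{c}\Big)^{-b} \;=\; e^{at}\,c^{b}\,(c-t)^{-b},
\end{equation*}
i.e.\ the MGF of $f_G(\cdot;a,b,c)$ is $e^{at}c^{b}(c-t)^{-b}$.

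First I would isolate the $t$-dependent factors in each summand of the theorem. Writing $\frac{1}{\theta}-\frac{t}{i}=\frac{1}{i}\big(\frac{i}{\theta}-t\big)$ gives $\big(\frac{1}{\theta}-\frac{t}{i}\big)^{-m}=i^{m}\big(\frac{i}{\theta}-t\big)^{-m}$, and the exponential factors as
\begin{equation*}
e^{-T\left(\frac{1}{\theta}-\frac{t}{i}\right)(\gamma_{m-v+1}-\gamma_{m+1})}=e^{-\frac{T}{\theta}(\gamma_{m-v+1}-\gamma_{m+1})}\,e^{\,t\frac{T}{i}(\gamma_{m-v+1}-\gamma_{m+1})}.
\end{equation*}
Then I would match $e^{\,t\frac{T}{i}(\gamma_{m-v+1}-\gamma_{m+1})}\big(\frac{i}{\theta}-t\big)^{-m}$ against the displayed shifted Gamma MGF with shift $a=\frac{T}{i}(\gamma_{m-v+1}-\gamma_{m+1})$, shape $b=m$ and rate $c=\frac{i}{\theta}$: this equals $\big(\frac{i}{\theta}\big)^{-m}$ times that MGF. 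Combining with the leading $i^{m}$ and using $i^{m}\big(\frac{i}{\theta}\big)^{-m}=\theta^{m}=\big(\frac{1}{\theta}\big)^{-m}$, the full $v$-th summand collapses to the constant $\big(\frac{1}{\theta}\big)^{-m}e^{-\frac{T}{\theta}(\gamma_{m-v+1}-\gamma_{m+1})}$, multiplied by the unchanged sign/denominator factor $(-1)^{v}/\{\prod_{j=1}^{v}(\gamma_{m-v+1}-\gamma_{m-v+j+1})\}\{\prod_{j=1}^{m-v}(\gamma_{j}-\gamma_{m-v+1})\}$, times the MGF of $f_G\big(\cdot;\frac{T}{i}(\gamma_{m-v+1}-\gamma_{m+1}),m,\frac{i}{\theta}\big)$. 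Summing over $v$ and invoking the uniqueness of moment generating functions --- equivalently, inverting the transform termwise on this finite collection of shifted Gamma pieces --- produces exactly the density in the statement.

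The only real work is the bookkeeping: keeping the two normalizers $i^{m}$ and $\big(\frac{i}{\theta}\big)^{m}$ straight so that they cancel to $\big(\frac{1}{\theta}\big)^{-m}$, and reading off the shift $\frac{T}{i}(\gamma_{m-v+1}-\gamma_{m+1})$ and the rate $\frac{i}{\theta}$ in accordance with the conventions fixed in the notation table. There is no analytic difficulty, since the MGF exists in a neighbourhood of $0$ and only finitely many terms are involved; note also that the coefficients $(-1)^{v}/\{\cdots\}$ need not be nonnegative, so the result is a \emph{signed} combination of shifted Gamma densities rather than a genuine mixture, which is harmless for the transform argument. The companion corollaries for the events $A$ and $B$ are obtained in the same way from their respective moment generating function theorems.
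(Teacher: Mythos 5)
Your argument is correct and is exactly the route the paper intends: the corollary is stated without proof as an immediate consequence of the preceding moment generating function theorem, the implicit step being precisely your term-by-term identification of $e^{t\frac{T}{i}(\gamma_{m-v+1}-\gamma_{m+1})}\big(\tfrac{1}{\theta}-\tfrac{t}{i}\big)^{-m}$ with $\big(\tfrac{1}{\theta}\big)^{-m}$ times the MGF of $f_G\big(\cdot;\tfrac{T}{i}(\gamma_{m-v+1}-\gamma_{m+1}),m,\tfrac{i}{\theta}\big)$, followed by uniqueness of the transform on the finite signed combination. Your bookkeeping of the normalizers $i^{m}$ and $(i/\theta)^{m}$ and your remark that the combination is signed rather than a true mixture are both accurate and add nothing inconsistent with the paper.
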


\noindent {\sc Proof of Theorem 1:} Combining corollaries 1 - 3, we get the first part of Theorem 1.   \qed

\noindent \underline{Derivation of $P(D_1=0)$.}
\begin{align*}
P(D_1=0)&=P(D_1=0,Z_{k:m:n}<T<Z_{m:m:n})+P(D_1=0,T<Z_{k:m:n}<Z_{m:m:n})\\
{}&+P(D_1=0,Z_{k:m:n}<Z_{m:m:n}<T)\\
{}&=P(Z_{k:m:n}<T<Z_{m:m:n})P(D_1=0|Z_{k:m:n}<T<Z_{m:m:n})\\
{}&+P(T<Z_{k:m:n}<Z_{m:m:n})P(D_1=0|T<Z_{k:m:n}<Z_{m:m:n})\\
{}&+P(Z_{k:m:n}<Z_{m:m:n}<T)P(D_1=0|Z_{k:m:n}<Z_{m:m:n}<T).
\end{align*}
\noindent We find each of the above probabilities separately.
\begin{align*}
&P(Z_{k:m:n}<T<Z_{m:m:n})\\
&=\sum_{j=k}^{m-1} P(Z_{j:m:n}<T<Z_{j+1:m:n})\\
&=\sum_{j=k}^{m-1} \prod_{v=1}^{j+1} \gamma_v \Big(\frac{1}{\theta}\Big)^{j+1}\int_{T}^{\infty}\int_{0}^T\ldots \int_{0}^{z_2}e^{-\frac{1}{\theta}\sum_{i=1}^{j} z_i(1+R_i)} e^{-\frac{1}{\theta} z_{j+1}(1+R^*_{j+1})} dz_1 \ldots dz_j dz_{j+1}\\
& =\sum_{j=k}^{m-1} \prod_{v=1}^{j+1} \gamma_v  \sum_{u=0}^{j} \frac{(-1)^u e^{-\frac{T}{\theta}\gamma_{j-u+1}}}{\{\prod_{v=1}^u  (\gamma_{j-u+1}-\gamma_{j+v-u+1})\}\{\prod_{v=1}^{j-u}  (\gamma_v-\gamma_{j-u+1})\}\{\gamma_{j-u+1}-u\}}.\\
\end{align*}
\begin{align*}
P(D_1=0|J=j)=\Big(\frac{\theta_1}{\theta_1+\theta_2}\Big)^j.
\end{align*}
\begin{align*}
&P(T<Z_{k:m:n}<Z_{m:m:n})\\
&=\prod_{v=1}^k \gamma_v \Big(\frac{1}{\theta}\Big)^k \int_T^{\infty} \int_{0}^{z_k}\ldots \int_0^{z_2} e^{-\frac{1}{\theta}\Big(\sum_{i=1}^k z_i(1+R_i)+z_k\gamma_k\Big)} dz_1\ldots dz_{k-1} dz_k\\
&=\prod_{v=1}^k \gamma_v \sum_{v=0}^{k-1} \frac{(-1)^v e^{-\frac{T}{\theta}\gamma_{k-v}}}{\{\prod_{j=1}^v  (\gamma_{k-v}-\gamma_{k-v+j})\}\{\prod_{j=1}^{k-1-v}  (\gamma_j-\gamma_{k-v})\}\gamma_{k-v}}.\\
\end{align*}
\begin{align*}
P(D_1=0|T<Z_{k:m:n}<Z_{m:m:n}) =\Big(\frac{\theta_1}{\theta_1+\theta_2}\Big)^k
\end{align*}
\begin{align*}
&P(Z_{k:m:n}<Z_{m:m:n}<T)\\
&=\prod_{v=1}^m \gamma_v \Big(\frac{1}{\theta}\Big)^m \int_0^{T} \int_{0}^{z_m}\ldots \int_0^{z_2} e^{-\frac{1}{\theta}\sum_{i=1}^m z_i(1+R_i)}dz_1\ldots dz_{m-1} dz_m\\
&=\prod_{v=1}^m \gamma_v \sum_{v=0}^{m} \frac{(-1)^v e^{-\frac{T}{\theta}(\gamma_{m-v+1}-\gamma_{m+1})}}{\{\prod_{j=1}^v  (\gamma_{m-v+1}-\gamma_{m-v+j+1})\}\{\prod_{j=1}^{m-v}  (\gamma_j-\gamma_{m-v+1})\}}.\\
\end{align*}
\begin{align*}
P(D_1=0|Z_{k:m:n}<Z_{m:m:n}<T)=\Big(\frac{\theta_1}{\theta_1+\theta_2}\Big)^m.
\end{align*}
%\newpage
%\section*{Appendix-2}


\begin{thebibliography}{9}
\bibitem{balakrishnan-1} Balakrishnan, N. (2007), Progressive censoring methodology: an appraisal, Test 16 (2007), 211-259 
(with discussion). 

\bibitem{aggarwala} Balakrishnan, N., Aggarwala, R. (2000), \textit{Progressive Censoring: Theory, Methods, and Applications}, 
Birkh$\ddot{\text{a}}$user.  

\bibitem{BC:2014} Balakrishnan, N. and Cramer, E. (2014), {\it The art of progressive censoring},  Birkh$\ddot{\text{a}}$user, New York.

\bibitem{BCI:2014} Balakrishnan, N., Cramer, E. and Iliopoulos, G. (2014), ``On the method of pivoting the CDF for exact confidence
intervals with illustration for exponential mean under life-test with time constraint'', {\it Statistics and Probability Letters},
89, 124 - 130.


\bibitem{chandrasekar} Balakrishnan, N., Childs, A., Chandrasekar, B. (2002), "An efficient computational method for moments of 
order statistics under progressive censoring", \textit{Statistics \& Probability Letters}, vol. 60, 359-365.


\bibitem{balakrishnan} Balakrishnan, N., Kundu, D. (2013), "Hybrid censoring models, inferential results and applications", \textit{Computational Statistics and Data Analysis}, vol. 57, 166-209. 

\bibitem{xie} Balakrishnan, N., Xie, Q., Kundu, D. (2009), "Exact inference for a simple step stress model from the exponential distribution under time constraint", \textit{Annals of the Institute of Statistical Mathematics}, vol. 61, 251-274.

\bibitem{bartholmew} Bartholmew, D.J. (1963), "The sampling distribution of an estimate arising in life testing", \textit{Technometrics,} vol. 5, 361-372.

\bibitem{shrijita} Bhattacharya, S., Pradhan, B., Kundu,D., (2014), "Analysis of hybrid censored competing risks data", \textit{Statistics,} vol. 48, No. 5, 1138-1154.

\bibitem{chan} Chan, P., Ng, H., Su, F. (2015), "Exact likelihood inference for the two-parameter exponential distribution under Type-II progressively hybrid censoring", \textit{Metrika}, vol. 78, 747-770.

\bibitem{chen} Chen, S.M., Bhattayacharya, G.K. (1998), "Exact confidence bound for an exponential parameter under hybrid censoring", \textit{Communication in Statistics Theory and Methodology}, 2429-2442.

\bibitem{childs} Childs, A., Chandrasekhar, B., Balakrishnan, N., Kundu, D. (2003), "Exact likelihood inference based on Type-I and Type-II hybrid censored samples from the exponential distribution", \textit{Ann. Inst. Statist. Math.}, vol. 55, 319-330.

\bibitem{cho} Cho, Y., Sun, H., Lee, K. (2015), "Exact likelihood inference for an exponential parameter under generalized progressive hybrid censoring scheme", \textit{Statistical Methodology}, vol. 23, 18-34. 

\bibitem{cohen} Cohen, A.C. (1963), "Progressively censored samples in life testing", \textit{Technometrics}, vol. 5, 327-329.

\bibitem{cox} Cox, D.R. (1959), "The analysis of exponentially lifetime distributed lifetime with two types of failures", 
\textit{Journal of Royal Statistical Society,} Ser. B, vol. 21, 411-421.

\bibitem{cramer} Cramer, E., Balakrishnan, N. (2013), "On some exact distributional results based on Type-I progressively hybrid censored data from exponential distribution", \textit{Statistical Methodology}, vol. 10, 128-150.

\bibitem{crowder} Crowder, M. (2001), \textit{Classical Competing Risks}, Chapman \& Hall/CRC.

\bibitem{dube} Dube, S., Pradhan, B., Kundu, D. (2011), "Parameter estimation of the hybrid censored log normal distribution", \textit{Journal of Statistical Computation and Simulation}, vol. 81, 275-287.

\bibitem{epstein} Epstein, B. (1954), Truncated life tests in the exponential case,  \textit{Ann. Math. Statist.}, vol. 25, 555-564.

\bibitem{GC:2016} Gorny, J. and Cramer, E. (2016), ``Exact likelihood inference for exponential distribution under generalized 
progressive hybrid censoring schemes'', {\it Statisical Methodology}, 29, 70 - 94.

\bibitem{hemmati} Hemmati, F., Khorram, E. (2013), "Statistical analysis of log-normal distribution under type-II progressive hybrid censoring schemes", \textit{Communications in Statistics-Simulation and Computation}, vol. 42, 52-75.

\bibitem{hoel} Hoel, D.G. (1972), "A representation of mortality data by competing risks" \textit{Biometrics}, vol. 28, 475-488.

\bibitem{kalbfleish} Kalbfleish, J.D., Prentice, R.L. (1980), \textit{"The Statistical Analysis of The Failure Time Data"}, Wiley, New York.

\bibitem{kundu} Kundu, D. (2007), "On hybrid censored Weibull distribution",  \textit{Journal of Statistical Planning and Inference}, vol. 87, 221-239.

\bibitem{basu} Kundu, D., Basu, S. (2000), "Analysis of incomplete data in presence of competing risks", \textit{Journal of Statistical Planning and Inference}, vol. 87, 221-239. 

\bibitem{gupta} Kundu, D., Gupta, R.D. (2007), "Analysis of Hybrid Life-tests in Presence of Competing Risks",\textit{Metrika,} vol. 65, Issue 2, 159-170.

\bibitem{joarder} Kundu, D., Joarder, A. (2006), "Analysis of Type-II progressively hybrid censored data", \textit{Computational Statistics and Data Analysis}, 2509-2528.

\bibitem{pradhan} Kundu, D., Pradhan, B. (2011), "Bayesian analysis of progressively censored competing risks data", \textit{Sankhya}, Ser. B, vol. 73, 276-296.

\bibitem{pena} Pena, E.A., Gupta, A.K. (1990), "Bayes estimation for the Marshall-Olkin exponential distribution",\textit{Journal of the Royal Statistical Society}, Ser. B, vol. 52, 379-389.

\bibitem{lawless} Lawless, J.F., (1982) \textit{"Statistical Models and Methods for Lifetimes Data"}, Wiley, New York.

\bibitem{prentice} Prentice, R.L., Kalbfleish, J.D., Peterson, Jr. A.V., Flurnoy, N., Farewell, V.T. and Breslow, N.E. (1978), "The analysis of failure time points in presence of competing risks", \textit{Biometrics,} vol. 34, 541-554. 




\end{thebibliography}
\end{document}